\newcommand{\D}{\displaystyle} 
\newcommand{\N}{\text{N}\xspace} \newcommand{\df}[1]{{\em {#1}}}
\newcommand{\urls}[1]{{\small \url{#1}}} \newcommand{\K}{K}
\newcommand{\X}{X} \newcommand{\V}{V} \newcommand{\+}{\oplus}
\newcommand{\prob}[4]{\begin{centering}\medskip\fbox{\begin{tabular}{l}{#1}\\
\begin{tabular}{lp{0.81\textwidth}}Input:
&{#2}\\Output: &{#3}\\Criterion:
&{#4}.\end{tabular}\end{tabular}}\medskip\end{centering}}
\date{}
\newtheorem{theorem}{Theorem}
\newtheorem{lemma}{Lemma}
\newtheorem{proposition}{Proposition}
\newtheorem{definition}{Definition}
\newtheorem{conjecture}{Conjecture}
\title{Bounds on series-parallel slowdown}
\author{Andr\'as Z. Salamon\\
Computing Laboratory, University of Oxford\\
Oxford-Man Institute of Quantitative Finance\\
\texttt{Andras.Salamon@comlab.ox.ac.uk}
\and Vashti Galpin\\
LFCS, School of Informatics, University of Edinburgh\\
\texttt{Vashti.Galpin@ed.ac.uk}
}
\begin{document}

\maketitle

\begin{abstract}
We use activity networks (task graphs) to model
parallel programs and consider series-parallel extensions of these
networks. Our motivation is two-fold: the benefits of series-parallel
activity networks and the modelling of programming constructs,
such as those imposed by current parallel computing environments.
Series-parallelisation adds precedence constraints to an activity
network, usually increasing its makespan (execution time). The slowdown
ratio describes how additional constraints affect the makespan.
We disprove an existing conjecture positing a bound of two on the
slowdown when workload is not considered.  Where workload is known,
we conjecture that 4/3 slowdown is always achievable, and prove our
conjecture for small networks using max-plus algebra. We analyse a
polynomial-time algorithm showing that achieving 4/3 slowdown is in
exp-APX. Finally, we discuss the implications of our results.
\end{abstract}

\section{Introduction}

An approach to reducing the execution time of a computer program is to
run it on multiple processors simultaneously.  The study of parallel
programming and architectures has seen a resurgence with the widespread
adoption of multi-core processing units in computing systems.
Commercial numerical software such as MATLAB\footnote{Via the MATLAB
Parallel Computing Toolbox. \urls{http://www.mathworks.com/}} and
Mathematica\footnote{From version 7. \urls{http://www.wolfram.co.uk/}}
can now take advantage of multiple processors, and OpenCL is a
recently finalised standard for programming with multiple-processor
systems \cite{Munshi2009:opencl}.

An important aspect of parallel programming is scheduling, the method
by which code is allocated to processors \cite{Kwok1999:static}.
Here we instead consider the inherent precedence constraints of a
parallel program and the constraints imposed by tranformation and by
the programming constructs that are used to describe parallelism,
both of which affect execution time. Our concerns are orthogonal
to scheduling since we assume sufficient processors and hence the
decision on what to schedule next is unimportant.

A program can be divided up into activities or tasks.  This can be
done in different ways depending on the granularity used.  Here we do
not consider granularity further but assume some reasonable approach
has been used.  The activities can be related to each other by the
order in which they must occur for the program to work correctly.
For instance, if one activity modifies a variable and another activity
uses this modified value, then the modifying activity must occur before
the activity uses the new value.  An activity that must occur before
another \df{precedes} the other activity and there is a \df{precedence
constraint} between the two activities.  Precedence is imposed by
the structure of the program and is inherent to the particular set
of activities.

The formalism used to describe precedences between activities is
known as an \df{activity network}, \df{network} or \df{task graph}.
We use the \df{activity-on-node} variant of this model, where weights
are associated with the vertices of the network.  These and variants
such as PERT networks or machine schedules (sometimes with edge instead
of node weights) are widely used in fields such as project management,
operational research, combinatorial optimization and computer science.

Activity networks can be classified by their structure.  Structures
of interest are series-parallel (SP), and level-constrained (LC)
\cite{malony:94} which are a proper subset of SP and a subset of of
Bulk Synchronous Programming (BSP) which has been used successfully as
an approach to parallel programming \cite{bisseling2004,valiant90}.
Analysis of activity networks is difficult but is easier
for SP \cite{dodin85}. For instance, scheduling is NP-hard but
polynomial-time for SP networks \cite{Garey1979:computers}.  We call
the addition of constraints to achieve an SP activity network
\df{series-parallelisation} (SP) \cite{Gonzalez2002:mapping}.

Programming constructs can also impose an SP structure over and
above the inherent constraints. The most obvious is the sequencing
of commands in a sequential programming language but the addition
of constraints can also occur with parallel constructs as we show in
the motivating example in Section~\ref{eg}.

The precedence constraints between activities determine the minimum
time to execute the program.  Assuming a sufficient number of
processors and non-preemptive, work-conserving scheduling the fastest
time for execution will be the time taken to execute slowest chains of
activities, called \df{critical paths}. Chains consist of activities
that are totally ordered and hence must proceed one after another,
excluding the possibility of parallelism.

This paper considers the difference in execution time between activity
networks, comparing a network with only inherent precedence constraints
with the same network with added precedence constraints that make it
an SP structure.  Adding constraints results in programs that take at
least as long and we consider the slowdown where slowdown is the ratio
of the slower program to the faster one.  We characterise the slowdown
induced by LC and disprove an existing conjecture about slowdown for
SP \cite{Vangemund1997:importance}. This requires demonstrating that
large slowdown can occur for every possible series-parallelisation of
a specific network.  A new conjecture is presented, and results proved
for small instances.  Additionally we discuss the complexity of finding
the optimal SP for a network.  First we present a motivating example,
followed by background and definitions of the relevant structures
after which come the main results and conjecture. We finish with the
implications of our results and further research.

\section{Motivating example}\label{eg}

\begin{figure}[t] \centering \mbox{
\subfigure[Neighbour synchronisation example]{ \begin{tikzpicture}[node
distance=0.8cm,xscale=1.2,yscale=-0.75,>=stealth]
\tikzstyle{v}=[circle,minimum size=1mm,inner sep=0pt,draw]
\tikzstyle{vn}=[circle,minimum size=1mm,inner sep=0pt] \node at
(-1,0) {}; \node at (3,0) {}; \node[v] (a11) at (0,0) {}; \node
[right=+4pt of a11.east] {$a_{1,1}$}; \node[v] (a12) at (1,0) {};
\node [right=+2pt of a12.east] {$a_{1,2}$}; \node[v] (a13) at (2,0)
{}; \node [right=+2pt of a13.east] {$a_{1,3}$}; \node (a14) at (3,0)
{\ldots}; \node[v] (a1N) at (4,0) {}; \node [right=-1pt of a1N.east]
{$a_{1,m}$}; \node[v] (a21) at (0,1) {}; \node [right=+4pt of a21.east]
{$a_{2,1}$}; \node[v] (a22) at (1,1) {}; \node [right=+2pt of a22.east]
{$a_{2,2}$}; \node[v] (a23) at (2,1) {}; \node [right=+2pt of a23.east]
{$a_{2,3}$}; \node (a24) at (3,1) {\ldots}; \node[v] (a2N) at (4,1)
{}; \node [right=-1pt of a2N.east] {$a_{2,m}$}; \node[v] (a31) at (0,2)
{}; \node[v] (a32) at (1,2) {}; \node[v] (a33) at (2,2) {}; \node[vn]
(a34) at (3,2) {}; \node (a34n) [right=of a33] {\ldots}; \node[v]
(a3N) at (4,2) {}; \node [below=-2.3pt of a31.south] {$\vdots$};
\node [below=-2.3pt of a32.south] {$\vdots$}; \node [below=-2.3pt of
a33.south] {$\vdots$}; \node [below=-2.3pt of a3N.south] {$\vdots$};
\node[v] (a311) at (0,3) {}; \node[v] (a312) at (1,3) {}; \node[v]
(a313) at (2,3) {}; \node[vn] (a314) at (3,3) {}; \node (a314n)
[right=of a313] {\ldots}; \node[v] (a31N) at (4,3) {}; \node[v] (at1)
at (0,4) {}; \node [right=+4pt of at1.east] {$a_{t,1}$}; \node[v] (at2)
at (1,4) {}; \node [right=+2pt of at2.east] {$a_{t,2}$}; \node[v]
(at3) at (2,4) {}; \node [right=+2pt of at3.east] {$a_{t,3}$};
\node (at4) at (3,4) {\ldots}; \node[v] (atN) at (4,4) {}; \node
[right=-1pt of atN.east] {$a_{t,m}$}; \draw[->,thick] (a11) to (a21);
\draw[->,thin,gray] (a12) to (a21); \draw[->,thick] (a11) to (a22);
\draw[->,thin,gray] (a12) to (a22); \draw[->,thick] (a13) to (a22);
\draw[->,thin,gray] (a12) to (a23); \draw[->,thin,gray] (a13) to (a23);
\draw[->,thin,gray] (a1N) to (a2N); \draw[->,thin,gray] (a21) to (a31);
\draw[->,thin,gray] (a22) to (a31); \draw[->,thin,gray] (a21) to (a32);
\draw[->,thin,gray] (a22) to (a32); \draw[->,thin,gray] (a23) to (a32);
\draw[->,thin,gray] (a22) to (a33); \draw[->,thin,gray] (a23) to (a33);
\draw[->,thin,gray] (a2N) to (a3N); \draw[->,thin,gray] (a311) to
(at1); \draw[->,thin,gray] (a312) to (at1); \draw[->,thin,gray]
(a311) to (at2); \draw[->,thin,gray] (a312) to (at2);
\draw[->,thin,gray] (a313) to (at2); \draw[->,thin,gray] (a312)
to (at3); \draw[->,thin,gray] (a313) to (at3); \draw[->,thin,gray]
(a31N) to (atN); \draw[dashed,->] (a11) to (a23); \draw[dashed,->]
(a11) to (a2N); \draw[dashed,->] (a12) to (a2N); \draw[dashed,->]
(a13) to (a21); \draw[dashed,->] (a13) to (a2N); \draw[dashed,->]
(a21) to (a33); \draw[dashed,->] (a21) to (a3N); \draw[dashed,->]
(a22) to (a3N); \draw[dashed,->] (a23) to (a31); \draw[dashed,->]
(a23) to (a3N); \draw[dashed,->] (a311) to (at3); \draw[dashed,->]
(a311) to (atN); \draw[dashed,->] (a312) to (atN); \draw[dashed,->]
(a313) to (at1); \draw[dashed,->] (a313) to (atN); \end{tikzpicture}
\label{neighboursyn} } \qquad \qquad \subfigure[\N network]{
\begin{tikzpicture}[xscale=1.75,yscale=-1.66,>=stealth] \node at (1,0)
{}; \node at (2,2.5) {}; \tikzstyle{v}=[circle,minimum size=1mm,inner
sep=0pt,draw] \node[v] (a) at (1,1) {}; \node[v] (b) at (2,1) {};
\node[v] (c) at (1,2) {}; \node[v] (d) at (2,2) {}; \draw[->,thin,gray]
(a) -- (c); \draw[->,thin,gray] (a) -- (d); \draw[->,thin,gray] (b) --
(d); \end{tikzpicture} \label{N} } } \caption{Activity
networks} \end{figure}
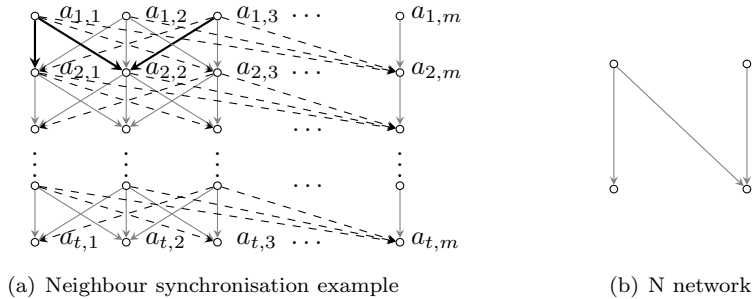

We next consider a simple example involving computations dependent
on earlier computations. In a 1-dimensional flow model of heat
diffusion in a metal rod, we calculate the temperature at $m$ points
for each time step. The temperature at time $\tau+1$ at point $p_i$ is
dependent on the temperature at time $\tau$ at points $p_{i-1}$, $p_i$
and $p_{i+1}$. If we view each calculation as an activity $a_{i,\tau}$,
this is an example of neighbour synchronisation (NS) as illustrated
in Figure~\ref{neighboursyn} when considering the solid lines only.
This network is not SP because of the edges $(a_{1,1},a_{2,1})$,
$(a_{1,1},a_{2,2})$ and $(a_{1,3},a_{2,2})$ and the lack of the
edge $(a_{1,3},a_{2,1})$. This is an example of the smallest non-SP
activity network, the \N network shown in Figure~\ref{N}.  There are
many instances of \N in the example activity network.

An obvious (although not necessarily the best) way to
series-parallelise this activity network is to require all activities
at time $\tau$ to precede those at time $\tau+1$.  The dashed lines in
Figure~\ref{neighboursyn} illustrate the added precedence constraints.
The edge $(a_{1,3},a_{2,1})$ is added as well as edges to remove
the other \N networks.  Figure~\ref{neighboursyn} is an example of
a level-constrained (LC) extension.

Assume unit workloads for all activities apart from one much
slower activity at each time instance $\tau$ with duration
$t(a_{\tau,2\tau-1})=C \gg 1$. Hence for every calculation of
a specific point over time, there is only one large workload.
The execution time for the above series-parallelisation will
be $(C-1)(m+1)/2 + s$ where $s\geq n$ is the total number of
timesteps. This gives large slowdown since the execution time
considering only inherent constraints is $C+s-1$.

There may be better ways to series-parallelise this network, however
a language such as MATLAB may impose a particular SP activity
network through its programming constructs. If one expresses this
example as parallel code using the \texttt{parfor} statement (in
the obvious simple way) then one will achieve the SP network given
in Figure~\ref{neighboursyn}.

An understanding of the slowdown obtained by various forms of
series-parallel\-isation is therefore important, particularly due
to the increased usage of parallel programming constructs to take
advantage of multi-core processors.

\section{Background}

This section defines notation and basic concepts for activity-on-node
networks.

\begin{definition} An \emph{activity-on-node network} \emph{(task
graph, activity network}, or simply, \emph{network)} consists of
\begin{itemize} \item $V=\{a_1,\dots,a_n\}$ a set of \df{activities},
\item $G=(V,E)$ a directed acyclic graph with \df{precedence
constraints} $E \subseteq V \times V$, \item $t:V\rightarrow(0,\infty)$
a \df{workload} assigning a duration to each activity.  \end{itemize}
\end{definition} A precedence constraint $(a,b)$ captures the idea that
activity $a$ must complete before activity $b$ can begin.  We assume
that we are working with the transitive closure of the precedence
constraints, namely that the precedence relation is irreflexive and
transitive.  However, when drawing activity networks, we only draw
the edges that appear in the transitive reduction of the network.

The \df{makespan} of an activity network $G$, denoted $T(G)$, is the
time to complete all activities of the network.  This depends on the
scheduling policy and the number and configuration of processors.
We make the following assumptions.

\textbf{Scheduling:} We assume non-preemptive scheduling, namely
once an activity is assigned to a processor, it will complete on
that processor without interruption; and a work-conserving scheduling
policy, namely no processor is left idle if there are still activities
waiting to start.

\textbf{Number and type of processors:} The processors are identical
and there are sufficiently many, in the sense that any activity that
is ready to execute can be started.  It is sufficient to have as many
processors available as the width of the activity network.

\textbf{Overheads:} All overheads such as communication, contention
and decisions about which activity to execute next are included in
the workload.

\medskip

\noindent Given these assumptions, we can characterise the makespan
of activity networks.

\begin{definition} Let $G=(V,E)$ and $G'=(V',E')$ be directed graphs.
\begin{itemize} \item $G$ is a \emph{subgraph} of $G'$, $G \subseteq
G'$ if $V \subseteq V'$ and $E \subseteq E'$.  \item If $G$ is a
\emph{subgraph} of $G'$ then $G'$ is a \emph{supergraph} of $G$.
\item $G$, a {subgraph} of $G'$, is an \df{antichain} if $E$
is empty.  \item $G$, a {subgraph} of $G'$, is a \df{chain} if $E$
is a total order over $V$.  \item $G'$, a {supergraph} of $G$, is
an \df{extension} if $E \subseteq E'$ and $V=V'$.  \end{itemize}
\end{definition} An extension formally defines what it means to add
precedence constraints and does not permit addition of activities so
$t$ remains unchanged. A \df{subnetwork} has the obvious meaning.

\begin{definition} Let $G=(V,E)$ be an activity network.
\begin{itemize} \item $depth(G) = \max\{ \bigl| C \bigr| \mid C \text{\
a chain in\ } G \}. $ \item $ width(G) = \max\{ \bigl| A \bigr| \mid
A \text{\ an antichain in\ } G \}. $ \end{itemize} \end{definition}
A chain represents its activities occuring one after the other,
and hence the time taken for a chain to execute is the sum of the
durations for each activity.  \begin{proposition} The makespan of
a chain $C=(V,E)$ with $V=\{a_1,\ldots,a_n\}$ is \newline \indent
\indent $ T(C) = \sum_{i=1}^n t(a_i). $ \end{proposition}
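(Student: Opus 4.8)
The plan is to exploit the fact that a chain is totally ordered, so the activities admit a unique linear arrangement, and then to track start and finish times along this arrangement by induction. First I would relabel the vertices so that $a_1 \prec a_2 \prec \cdots \prec a_n$, where $\prec$ denotes the precedence relation; this is legitimate precisely because $E$ is a total order over $V$. For each $i$ write $s(a_i)$ for the start time of $a_i$ and $f(a_i) = s(a_i) + t(a_i)$ for its finish time under a fixed admissible schedule, taking $s(a_1) = 0$ without loss of generality. The goal reduces to showing $f(a_i) = \sum_{j=1}^{i} t(a_j)$ for every $i$, since the makespan is the largest finish time.

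The inductive step is where the scheduling assumptions do their work, and they must be invoked from both sides. Since $(a_{i-1}, a_i) \in E$, the precedence constraint forces $s(a_i) \geq f(a_{i-1})$, giving the lower bound: activity $a_i$ cannot begin before its immediate predecessor completes. For the matching upper bound I would use that the policy is work-conserving and that there are sufficiently many processors: the set of predecessors of $a_i$ is exactly $\{a_1, \dots, a_{i-1}\}$, and by the induction hypothesis these finish at the strictly increasing times $\sum_{j=1}^{k} t(a_j)$ for $k = 1, \dots, i-1$ (strict because $t$ takes values in $(0,\infty)$), so the last to complete is $a_{i-1}$. A work-conserving policy with a free processor available therefore starts $a_i$ at exactly $f(a_{i-1})$, so $s(a_i) = f(a_{i-1})$ and hence $f(a_i) = f(a_{i-1}) + t(a_i) = \sum_{j=1}^{i} t(a_j)$. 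The base case $f(a_1) = t(a_1)$ is immediate since $a_1$ has no predecessors.

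Finally, because the finish times form a strictly increasing sequence, $T(C) = \max_i f(a_i) = f(a_n) = \sum_{i=1}^{n} t(a_i)$, as claimed. I expect no genuine difficulty in the calculation itself; the only point requiring care is to ensure that both inequalities $s(a_i) \geq f(a_{i-1})$ and $s(a_i) \leq f(a_{i-1})$ are justified by the stated non-preemptive, work-conserving assumptions together with the availability of enough processors, so that the start time is pinned down exactly rather than merely bounded. Establishing this tightness is the crux, modest as it is; everything else is a routine telescoping induction.
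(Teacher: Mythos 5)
Your proof is correct and is simply a careful formalisation of the argument the paper itself gives only informally (the remark preceding the proposition that a chain's activities ``occur one after the other''): you pin down each start time by combining the precedence lower bound with the work-conserving/sufficient-processors upper bound and telescope. No gap; same approach, just spelled out.
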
 The makespan
of an activity network can be characterised as the time it takes to
complete a chain in the network with the longest completion time (a
critical path).  The proof is straightforward, and makes essential use
of the work-conserving property of the scheduling policy, and the fact
that there are sufficient processors.  If the number of processors
is insufficient, a work-conserving approach may be sub-optimal
\cite{Kohler1975:preliminary}.  \begin{proposition} The makespan of an
activity network $G=(V,E)$ is \newline \indent \indent $ T(G) = \max\{
T(C) \mid C \text{\ is a chain in\ } G \}. $ \end{proposition}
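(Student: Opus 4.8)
The plan is to establish the two inequalities $T(G) \geq \max\{T(C) : C \text{ a chain in } G\}$ and $T(G) \leq \max\{T(C) : C \text{ a chain in } G\}$ separately. The lower bound is immediate from the structure of chains and uses none of the processor assumptions: the activities of any chain $C$ are totally ordered by precedence, so under non-preemptive scheduling no two of them overlap in time and each waits for its predecessor in $C$ to complete. Hence completing all of $G$ takes at least $T(C)$ for every chain $C$, giving $T(G) \geq \max_C T(C)$.

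For the upper bound the idea is to show that under the stated assumptions each activity begins exactly when the last of its predecessors completes. I would define, for each activity $a$, an earliest finish time $f(a)$ equal to $t(a)$ if $a$ has no predecessors and to $t(a) + \max\{f(b) : (b,a)\in E\}$ otherwise, and then prove by induction along a topological order of $G$ that the work-conserving schedule finishes each $a$ at exactly $f(a)$. A parallel induction shows $f(a) = \max\{T(C) : C \text{ a chain ending at } a\}$, since every chain ending at $a$ is either the singleton $\{a\}$ or the extension by $a$ of a chain ending at some predecessor $b$, and $t(a) > 0$ forces the maximum to be attained by an extension whenever $a$ has predecessors. Taking the maximum of $f(a)$ over all activities then yields $T(G) = \max_a f(a) = \max_C T(C)$, because the network completes when its last activity finishes and every chain ends at some activity.

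The main obstacle is justifying the induction step, specifically that no activity is ever delayed waiting for a free processor; this is where the assumption of sufficiently many processors is essential. The key observation is that at any instant the set of activities currently executing together with those that are ready but not yet started forms an antichain: if two such activities were comparable, say $(a,b)\in E$, then the readiness of $b$ would require $a$ to have already completed, contradicting that $a$ is still executing or unstarted. Consequently this set has size at most $width(G)$, so with at least $width(G)$ processors available a work-conserving policy can start every ready activity immediately. This pins down the behaviour in the induction step: the precedence constraints prevent $a$ from starting before all its predecessors finish, so its finish time is at least $f(a)$, while immediate assignment under work-conservation prevents any later start, so its finish time is exactly $\max\{f(b) : (b,a)\in E\} + t(a) = f(a)$, completing the argument.
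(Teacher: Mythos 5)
Your proof is correct and follows exactly the route the paper indicates: the paper omits the argument, remarking only that it ``makes essential use of the work-conserving property of the scheduling policy, and the fact that there are sufficient processors,'' and these are precisely the two ingredients you isolate (work-conservation plus the antichain/width bound guaranteeing a free processor) in your induction on earliest finish times. Nothing further is needed.
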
 When we
create extensions by adding constraints to obtain a specific network
structure, we cannot decrease the time that the activity network will
take to complete \cite{kleinoder82,Salamon2001:thesis}.  We can define
the ratio between the two makespans as a slowdown\footnote{If
we were comparing a sequential program with its parallel version,
we would consider \emph{speedup}, namely the ratio of the faster to
the slower. Since we know that the program with additional precedence
constraints will take at least as long as the original, we consider
slowdown, the ratio of the slower to the faster.}.

\begin{definition} Let $H$ be an extension of $G$ then the
\emph{slowdown} is $T(H)/T(G)$.  \end{definition}

\section{Structure of activity networks}\label{structure}

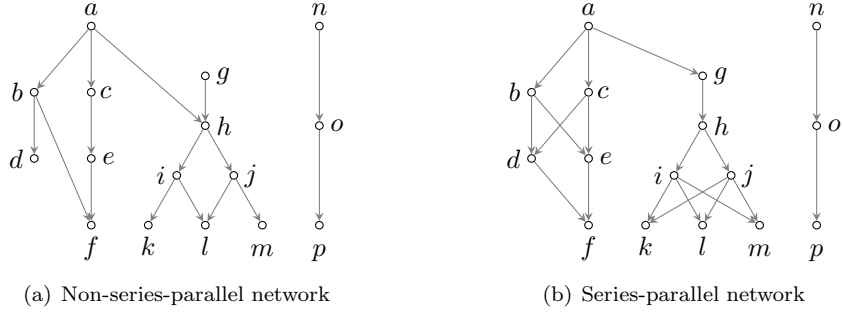
\begin{figure}[t] \centering
\mbox{ \subfigure[Non-series-parallel network]{
\begin{tikzpicture}[xscale=0.75,yscale=-0.66,>=stealth]
\tikzstyle{v}=[circle,minimum size=1mm,inner sep=0pt,draw] \node
at (0,1) {}; \node at (7,1) {}; \node[v] (u) at (2,1) {}; \node[v]
(v1) at (1,2.33) {}; \node[v] (v2) at (2,2.33) {}; \node[v] (w1) at
(1,3.66) {}; \node[v] (w2) at (2,3.66) {}; \node[v] (x) at (2,5) {};
\node[v] (r) at (4,2) {}; \node[v] (s) at (4,3) {}; \node[v] (t1) at
(3.5,4) {}; \node[v] (t2) at (4.5,4) {}; \node[v] (y1) at (3,5) {};
\node[v] (y2) at (4,5) {}; \node[v] (y3) at (5,5) {}; \node[v] (z1)
at (6,1) {}; \node[v] (z2) at (6,3) {}; \node[v] (z3) at (6,5) {};
\node [above=-1pt of u.north] {$a$}; \node [left=-1pt of v1.west]
{$b$}; \node [right=-2pt of v2.east] {$c$}; \node [left=-1pt
of w1.west] {$d$}; \node [right=-1pt of w2.east] {$e$}; \node
[below=-1pt of x.south] {$f$}; \node [right=-1pt of r.east] {$g$};
\node [right=-1pt of s.east] {$h$}; \node [left=-1pt of t1.west]
{$i$}; \node [right=-1pt of t2.east] {$j$}; \node [below=-1pt
of y1.south] {$k$}; \node [below=-1pt of y2.south] {$l$}; \node
[below=+2pt of y3.south] {$m$}; \node [above=-1pt of z1.north] {$n$};
\node [right=-1pt of z2.east] {$o$}; \node [below=+2pt of z3.south]
{$p$}; \draw[->,thin,gray] (u) -- (v1); \draw[->,thin,gray] (u) --
(v2); \draw[->,thin,gray] (v1) -- (w1); \draw[->,thin,gray] (v2) --
(w2); \draw[->,thin,gray] (v1) -- (x); \draw[->,thin,gray] (w2) --
(x); \draw[->,thin,gray] (u) -- (s); \draw[->,thin,gray] (r) -- (s);
\draw[->,thin,gray] (s) -- (t1); \draw[->,thin,gray] (s) -- (t2);
\draw[->,thin,gray] (t1) -- (y1); \draw[->,thin,gray] (t1) -- (y2);
\draw[->,thin,gray] (t2) -- (y2); \draw[->,thin,gray] (t2) -- (y3);
\draw[->,thin,gray] (z1) -- (z2); \draw[->,thin,gray] (z2) -- (z3);
\end{tikzpicture} \label{nsp2} } \qquad \subfigure[Series-parallel
network]{ \begin{tikzpicture}[xscale=0.75,yscale=-0.66,>=stealth]
\node at (0,1) {}; \node at (7,1) {}; \tikzstyle{v}=[circle,minimum
size=1mm,inner sep=0pt,draw] \node[v] (u) at (2,1) {}; \node[v] (v1) at
(1,2.33) {}; \node[v] (v2) at (2,2.33) {}; \node[v] (w1) at (1,3.66)
{}; \node[v] (w2) at (2,3.66) {}; \node[v] (x) at (2,5) {}; \node[v]
(r) at (4,2) {}; \node[v] (s) at (4,3) {}; \node[v] (t1) at (3.5,4)
{}; \node[v] (t2) at (4.5,4) {}; \node[v] (y1) at (3,5) {}; \node[v]
(y2) at (4,5) {}; \node[v] (y3) at (5,5) {}; \node[v] (z1) at (6,1)
{}; \node[v] (z2) at (6,3) {}; \node[v] (z3) at (6,5) {}; \node
[above=-1pt of u.north] {$a$}; \node [left=-1pt of v1.west] {$b$};
\node [right=-2pt of v2.east] {$c$}; \node [left=-1pt of w1.west]
{$d$}; \node [right=-1pt of w2.east] {$e$}; \node [below=-1pt of
x.south] {$f$}; \node [right=-1pt of r.east] {$g$}; \node [right=-1pt
of s.east] {$h$}; \node [left=-1pt of t1.west] {$i$}; \node [right=-1pt
of t2.east] {$j$}; \node [below=-1pt of y1.south] {$k$}; \node
[below=-1pt of y2.south] {$l$}; \node [below=+2pt of y3.south] {$m$};
\node [above=-1pt of z1.north] {$n$}; \node [right=-1pt of z2.east]
{$o$}; \node [below=+2pt of z3.south] {$p$}; \draw[->,thin,gray]
(u) -- (v1); \draw[->,thin,gray] (u) -- (v2); \draw[->,thin,gray]
(v1) -- (w1); \draw[->,thin,gray] (v1) -- (w2); \draw[->,thin,gray]
(v2) -- (w1); \draw[->,thin,gray] (v2) -- (w2); \draw[->,thin,gray]
(w1) -- (x); \draw[->,thin,gray] (w2) -- (x); \draw[->,thin,gray] (u)
-- (r); \draw[->,thin,gray] (r) -- (s); \draw[->,thin,gray] (s) --
(t1); \draw[->,thin,gray] (s) -- (t2); \draw[->,thin,gray] (t1) --
(y1); \draw[->,thin,gray] (t1) -- (y2); \draw[->,thin,gray] (t1) --
(y3); \draw[->,thin,gray] (t2) -- (y1); \draw[->,thin,gray] (t2) --
(y2); \draw[->,thin,gray] (t2) -- (y3); \draw[->,thin,gray] (z1) --
(z2); \draw[->,thin,gray] (z2) -- (z3); \end{tikzpicture} \label{sp1}
} } \caption{Series-parallelisation of an activity
network} \end{figure}

We need to define what it means for a activity network to be
series-parallel.  Figure~\ref{nsp2} is not SP and Figure~\ref{sp1}
is SP. The \N network in Figure~\ref{N} is also not SP.
An activity network is SP if it consists of a single activity
or can be recursively decomposed into chains and antichains
using series and parallel composition.  \begin{definition} An
activity network $G=(V,E)$ is \emph{series-parallel (SP)} if $G$
can be expressed using the SP grammar $g ::= (g\+g) \mid g\cdot
g \mid a$ where $a$ is an activity, and each activity appears at
most once.  A string generated by the SP grammar is an \emph{SP
expression}.  \end{definition} We also use juxtaposition $G_1G_2$ for
$G_1\cdot G_2$. The network in Figure~\ref{sp1} can be expressed as
$(a(((b\+c)(d\+e)f)\+(gh(i\+j)(k\+l\+m)))\+nop)$.  \begin{definition}
Let $G_1 = (V_1,E_1)$ and $G_2 = (V_2,E_2)$ be activity networks
with $V_1 \cap V_2 = \emptyset$ and $E_1 \cap E_2 = \emptyset$.
\begin{itemize} \item The \emph{parallel composition} of $G_1$
and $G_2$ is $G_1\+G_2 = (V_1\cup V_2,E_1\cup E_2)$.  \item The
\emph{series composition} of $G_1$ and $G_2$ is $G_1\cdot G_2 =
(V_1\cup V_2,(V_1 \times V_2) \cup E_1\cup E_2)$.  \end{itemize}
\end{definition} SP networks are exactly those that do not contain the
\N network \cite{Valdes1982:recognition}.  If we have a network that
is not SP, we can add constraints until it is SP. An SP extension
of an activity network always exists since if we add sufficient
constraints we obtain a chain, which is SP \cite{fishburn85}.
The activity network in Figure~\ref{sp1} is a series-parallelisation
of the activity network in Figure~\ref{nsp2}. We can easily calculate
the makespan of an SP network.

\begin{proposition} Let $G=(V,E)$ be an SP activity network with
SP expression $g$.  The makespan of $G$ is $T(G)=T(g)$ where
\newline \hspace*{0.2cm} $T((g_1\+g_2)) = \max\{T(g_1),T(g_2)\},
\quad T(g_1\cdot g_2) = T(g_1)+T(g_2), \quad T(a) = t(a)$.
\end{proposition}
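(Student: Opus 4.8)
The plan is to prove this by structural induction on the SP expression $g$, mirroring the recursive definition of SP networks. The base case and two inductive cases will correspond exactly to the three productions of the SP grammar. Throughout I will rely on the earlier Proposition characterising $T(G)$ as $\max\{T(C) \mid C \text{ a chain in } G\}$, so the real content is to show that the recursively defined quantity $T(g)$ computes this maximum over chains.

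For the base case $g=a$, the network is a single activity, its only chain is $a$ itself, and so $T(G)=t(a)=T(a)$ by the chain Proposition. For the parallel case $g=(g_1\+g_2)$, I would observe that since $G_1\+G_2$ adds no edges between $V_1$ and $V_2$, any chain $C$ lies entirely within $G_1$ or entirely within $G_2$; conversely every chain of $G_1$ and of $G_2$ remains a chain of the composition. Hence the set of chains of $G_1\+G_2$ is the union of the chains of $G_1$ and of $G_2$, so the maximum makespan splits as $\max\{T(G_1),T(G_2)\}$, which equals $\max\{T(g_1),T(g_2)\}$ by the induction hypothesis. For the series case $g=g_1\cdot g_2$, the definition adds all edges $V_1\times V_2$, so every vertex of $G_1$ precedes every vertex of $G_2$. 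A chain in $G_1\cdot G_2$ therefore decomposes uniquely as a chain of $G_1$ followed by a chain of $G_2$, and the longest such chain is obtained by concatenating a longest chain of each part. Using additivity of $T$ along a chain (from the chain Proposition), the maximum makespan is $T(G_1)+T(G_2)=T(g_1)+T(g_2)$ by induction.

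I anticipate the main obstacle to be the series case, specifically making the claim ``every chain of $G_1\cdot G_2$ splits as a chain of $G_1$ concatenated with a chain of $G_2$, and the longest one is the concatenation of the two longest'' fully rigorous. One must verify both that restricting a chain to $V_1$ and to $V_2$ yields chains (immediate, since a subset of a totally ordered set is totally ordered), and that the concatenation of a longest chain in $G_1$ with a longest chain in $G_2$ is indeed a chain in $G_1\cdot G_2$ — this uses precisely the added edges $V_1\times V_2$ to guarantee total ordering across the join. A secondary subtlety is that the grammar condition ``each activity appears at most once'' together with the disjointness hypotheses $V_1\cap V_2=\emptyset$ and $E_1\cap E_2=\emptyset$ in the composition definitions ensures the decomposition is well defined and the vertex sets partition cleanly; I would note this briefly rather than belabour it, since the earlier Proposition already does the heavy lifting for additivity within a single chain.
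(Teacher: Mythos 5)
Your proof is correct. The paper actually states this proposition without giving any proof, treating it as a routine consequence of the earlier chain characterisation of makespan; your structural induction on the SP expression (chains of a parallel composition lie wholly in one component, chains of a series composition split across the $V_1\times V_2$ join) is exactly the standard argument the paper implicitly relies on, and your handling of the series case is the right place to spend the care.
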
 This links the SP grammar with the max-plus algebra
\cite{cuninghame79}.  For convenience, the symbol $\+$ will denote
$\max$ and the symbol $\cdot $ will denote arithmetic $+$.

Level-constrained networks are a strict subset of SP.  The level of
an activity $a$ is the size of a maximal chain in the network which
has $a$ as its last activity.

\begin{definition} For an activity network $G=(V,E)$, the \df{level}
of an activity $a$ is \newline \indent $ \lambda(a) = \max \{ \bigl|
C \bigr| \mid C \text{\ is a chain in $G$, $a \in C$, and for all\ }
b \in C, (b,a) \in E \}.  $ \end{definition} The level of each activity
in a network can be computed in polynomial time, by marking activities
in a breadth-first search of the network's transitive reduction.  The
depth of an activity network is the maximum level of its activities.
We can now add precedence constraints to obtain an extension of
the network that maintains its level structure.  This is a common
technique \cite{malony:94,Salamon2001:thesis}.  \begin{definition}
For an activity network $G=(V,E)$, the \df{level-constrained (LC)
extension} of $G$ is the network $G_L=(V,E_L)$, where $ E_L = \{ (a,b)
\mid \lambda(a) < \lambda(b) \}.  $ \end{definition} Note that $G_L$
is an extension of $G$, and that $depth(G) = depth(G_L)$.  We can
identify a level as $\Lambda_i = \{a \in V \mid \lambda(a) = i \}$;
each level is an antichain and the levels partition the activities.
$G_L$ is also in BSP form \cite{valiant90} since each level consists
of independent chains (of size one, in this case) and all activities
in one level must complete before any activity in the next level
can start. LC networks have the form $\alpha_1\ldots\alpha_d$ where
$\alpha_i = (a_{i,1}\+\ldots\+a_{i,m_i})$.

We consider a structure that is non-SP for sufficiently large networks.

\begin{definition} A \df{neighbour synchronisation (NS) network} of
depth $d$, width $w$, and degree $\Delta$, denoted $ns(d,w,\Delta)$,
consists of activities $a_{i,j}$ with $i \in \{1,\ldots,d\}$, $j \in
\{1,\ldots,w\}$, and precedence constraints $(a_{i,j},a_{i+1,j+k})$ for
every $k = -\lfloor (\Delta-1)/2\rfloor , -\lfloor (\Delta-1)/2\rfloor
+1, \ldots, \lceil (\Delta-1)/2 \rceil $ (as long as $1 \le j+k
\le w$).  \end{definition} Figure~\ref{neighboursyn} depicts an NS
network of depth $t$, width $m$, and degree $3$.  The dashed precedence
constraints are those added by the process of LC extension.

\section{Bounding LC slowdown}\label{lcbound}

There are three reasons for considering LC networks.  First, they
relate to BSP, a useful applied technique for parallel programming,
and second, they are efficient to construct for any given activity
network.  Last, it is straightforward to construct an upper-bound on
the slowdown for a given workload $t$.

\begin{theorem}\label{thm:rho} Given an activity network $G=(V,E)$
and its LC extension $G_L=(V,E_L)$, the slowdown is bounded by
the ratio $\rho$ of the largest to the smallest duration in the
workload.  \[ \frac{T(G_L)}{T(G)} \leq \frac{\max\{t(a)\mid a \in
V\}}{\min\{t(a)\mid a \in V\}} = \rho.  \] \end{theorem} \begin{proof}
Given an LC extension $G_L = (V,E_L)$ of $G$, its makespan is $ T(G_L)
\: = \: \sum_{i=1}^{depth(G)} \max\{ t(a) \mid a \in \Lambda_i \} \:
\leq \: depth(G).\max\{ t(a) \mid a \in V\} $ since a critical path
is determined by the slowest activity at each level and is bounded
by the depth times the largest duration. Also \( depth(G).\min\{
t(a) \mid a \in V\} \le T(G) \) since the depth of $G$ is the size
of the longest chain and the time taken for each activity in this
chain is at least as long as the activity with the shortest duration.
The result follows from these two inequalities. \end{proof}

By Theorem~\ref{thm:rho}, if all activities have similar durations,
then the slowdown will be close to one.  If we know in advance that
$\rho$ is small, then it is reasonable to series-parallelise using an
LC extension.  This is efficient to obtain, and BSP is then also an
appropriate model for the computation, since any BSP can be transformed
to LC by treating independent chains as single activities.

Conversely, if $\rho$ is large, its importance depends on how tight it
is. If it is tight, and we know that large values may occur because
an activity could be delayed (for instance, due to a cache miss, or
swapping to and from disk, or because of competition for resources),
then the large value of $\rho$ indicates that a LC extension is a
poor choice for series-parallelization.

By considering $ns(d,w,3)$ with $w \ge 2d-1$, we can demonstrate that
slowdown for the LC extension can be arbitrarily close to $\rho$.
\begin{proposition} For any $\epsilon > 0$, there exists an NS
activity network $G$ and a workload $t$ such that $\rho - T(G_L)/T(G)
< \epsilon$.  \end{proposition} However, $\rho$ can be pessimistic:
consider $ns(1,w,3)$ with one large activity and many small ones.
This is already SP, yet $\rho$ can be made arbitrarily large.

The next section presents two conjectures about bounds for general
series-parallelisations of activity networks.

\section{Bounding SP slowdown}

This section considers a conjecture by van Gemund
\cite{Vangemund1997:importance}.  We need to introduce a parameterised
notation for makespan. Denote the makespan by $T(G,t)$ to indicate
specifically the role of the workload function $t$.  There are
two different classes of algorithms that can be used to obtain a
series-parallelisation.  We use the notation $S(G,t)$ to denote the
SP network that is the output of some algorithm that considers both
the graph and the workload, and $S'(G)$ to denote the SP network
that is the output of some algorithm that considers only the graph.
Using this notation we can posit two distinct hypotheses:

{\renewcommand{\arraystretch}{1.4} $\begin{array}{lp{0.1cm}lllllll}
\textbf{Workload-independent:} & & \D \exists \kappa\: & \forall G\:
& \exists S' & \forall t\: & \bigl[T(S'(G),t)/T(G,t) \leq \kappa\bigr]
\\ \textbf{Workload-dependent:} & & \D \exists \kappa\: & \forall G\:
& \forall t & \exists S\: & \bigl[T(S(G,t),t)/T(G,t) \leq \kappa\bigr]
\\ \end{array}$}

\noindent These can be understood as follows.  The first states that
for every graph, there is a series-parallelisation with a slowdown
bound of $\kappa$ that works for every possible workload on that graph
and the second states that given a graph and a workload, there is a
series-parallelisation with slowdown bound of $\kappa$.

Van Gemund \cite{Vangemund1997:importance} conjectures that $\kappa =
2$ is a bound for slowdown for the workload-independent case.

\begin{conjecture}[\cite{Vangemund1997:importance}] \label{factortwo}
For any activity network $G=(V,E)$, it is possible to find a SP
extension $G_{SP}$ of $G$, such that for every workload $t \colon
V \rightarrow (0,\infty)$, \[ \frac{T(G_{SP},t)}{T(G,t)} \leq 2.
\] \end{conjecture}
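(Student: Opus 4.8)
\medskip
\noindent\textbf{Proof strategy (refutation).}\quad
Since the abstract announces that this conjecture is false, my plan is not to prove it but to refute it: I will exhibit a single activity network $G$ for which \emph{every} SP extension admits a workload making the slowdown exceed $2$. Formally I aim to establish the negation, $\exists G\ \forall G_{SP}\ \exists t\ [\,T(G_{SP},t)/T(G,t) > 2\,]$. After checking small cases (the \N network, then small NS networks) one sees that the level-constrained extension is already bad, but individual SP extensions can repair a single \N; the real content is to upgrade ``bad for LC'' to ``bad for \emph{every} SP extension''. The natural candidate is therefore the neighbour synchronisation family $ns(d,w,3)$ of the motivating example.

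The first step is a clean description of the precedence order. Writing $u=i+j$ and $v=i-j$, one checks that $a_{i,j}$ precedes $a_{i',j'}$ in $ns(d,w,3)$ exactly when $u\le u'$, $v\le v'$ and $(i,j)\neq(i',j')$; that is, the reachability relation is the two-dimensional dominance order on $(u,v)$. Hence a chain is an up-right monotone sequence and an antichain is a down-right ``staircase''. I will fix a staircase antichain $D=\{a_{1,2},a_{2,4},\dots,a_{d,2d}\}$ of size $d$ (taking $w$ at least $2d$), and, given any SP extension $G_{SP}$, assign workload $C\gg 1$ to the activities of $D$ that can be forced onto a common chain of $G_{SP}$ and workload $1$ elsewhere. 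Because $D$ is a $G$-antichain, every chain of $G$ meets $D$ at most once, so $T(G,t)\le C+(d-1)$; if $k$ activities of $D$ lie on a single chain of $G_{SP}$, then $T(G_{SP},t)\ge kC$. The slowdown is thus at least $kC/(C+d-1)$, which tends to $k$ as $C\to\infty$, so forcing $k\ge 3$ already refutes $\kappa=2$, and forcing $k=\Theta(d)$ would show the workload-independent slowdown is unbounded.

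It remains to show that a large sub-antichain of $D$ is unavoidably serialised. The lever is connectivity. If the outermost operation of $G_{SP}$ were a parallel composition $P_1\+P_2$, there would be no $G_{SP}$-edges, hence no $G$-edges, between $P_1$ and $P_2$; but the underlying undirected graph of $ns(d,w,3)$ is connected for $d,w\ge 2$, so no such split exists and the outermost operation must be series. Recursing on the maximal series decomposition yields an ordered sequence of blocks $B_1,\dots,B_m$ with every activity of $B_\ell$ preceding every activity of $B_{\ell'}$ in $G_{SP}$ whenever $\ell<\ell'$. Any selection of one activity from several distinct blocks is then automatically a chain of $G_{SP}$, and if those activities are drawn from $D$ they are pairwise $G$-incomparable. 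So it suffices to find a transversal of $D$ across at least three distinct blocks.

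The main obstacle is exactly this last combinatorial claim, because SP extensions may differ wildly from the level-constrained one: all of $D$ could fall inside a single block, whose own outermost operation is parallel and might split $D$ apart. I expect to control this by induction on the decomposition tree, using the dominance geometry to argue that a parallel split can only separate staircase points whose connecting grid vertices have already been removed by earlier series cuts, so that ``paying'' for such a separation forces a compensating series concatenation elsewhere. Making this accounting precise---so that every decomposition leaves at least three, and ideally $\Theta(d)$, activities of $D$ on one common chain---is the crux of the argument; the surrounding makespan estimate above is routine.
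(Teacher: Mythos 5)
You have correctly read the statement as one to be refuted rather than proved, and your outer scaffolding matches the paper's: the candidate family is $ns(d,w,3)$, connectivity forces the outermost operation of any SP extension to be a series composition (this is Lemma~\ref{connected}), and the final accounting --- a $G$-antichain of size $k$ forced onto a common chain of $G_{SP}$, weighted $C\gg 1$ against unit weights elsewhere, giving slowdown tending to $k$ --- is exactly the workload step that closes the paper's proof of Theorem~\ref{nofactortwo} (there with $k=3$, $Q=ns(3,8,3)$, and weights $1$ versus $\epsilon$). The gap is the combinatorial core, which you yourself flag as ``the crux'': you propose to find three elements of a \emph{fixed} staircase antichain $D$ lying in three distinct blocks of the top-level series decomposition, but such a transversal need not exist. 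The top-level series decomposition may consist of only two blocks $\alpha\beta$, and inside a block the decomposition may immediately go parallel, leaving the elements of $D$ in that block pairwise incomparable in $G_{SP}$; nothing in your dominance-order geometry yet rules this out, and the proposed induction over the decomposition tree is not carried out.

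The paper closes this gap by a different mechanism, and you should compare. It does not try to serialise a prescribed antichain across many series blocks. Instead, Lemma~\ref{depth} shows that at least one side of the split $\alpha\beta$, say $\beta$, restricts to a non-SP subnetwork of $G$; hence $\beta$ contains an embedded \N network $Q_\N$ whose destruction forces an edge of the extension between two $G$-incomparable activities $b,c$ \emph{inside} $\beta$. Only the third activity $a$ comes from the series split: Lemma~\ref{updownclosed} (predecessor/successor closure of $\alpha$ and $\beta$) locates an $a\in\alpha$ that is $G$-incomparable to all of $Q_\N$, and then $\{a,b,c\}$ is the desired antichain-turned-chain. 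Note also that the paper chooses $\{a,b,c\}$ adaptively as a function of the given SP extension, rather than as a sub-antichain of a staircase fixed in advance; your plan would additionally have to show that the forced chain can always be found inside your particular $D$. If you want to pursue your route to get $k=\Theta(d)$ and hence unbounded workload-independent slowdown (which the paper only asserts in its conclusions, by iterating the construction for larger NS networks), the accounting argument you gesture at in your final paragraph is exactly what must be supplied; as written, the refutation of the factor-two bound is not yet complete.
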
 There is an algorithm that meets this bound
under ``reasonable'' workloads \cite{Vangemund1997:importance}.
The following result disproves Conjecture~\ref{factortwo}.
\begin{theorem}\label{nofactortwo} For any series-parallelisation of
$Q = ns(3,8,3)$, there exists a workload leading to slowdown greater
than $2$.  \end{theorem}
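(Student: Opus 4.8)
The plan is to disprove the workload-independent conjecture by showing that for the specific network $Q = ns(3,8,3)$, no single series-parallelisation can keep slowdown at or below $2$ across all workloads. The key is that the conjecture requires one fixed SP extension $G_{SP}$ to work for every workload $t$; so I must exhibit an \emph{adversary strategy} that, given any candidate $G_{SP}$, produces a bad workload. First I would enumerate the structural features of $Q$: it has depth $3$, width $8$, degree $3$, so its activities are $a_{i,j}$ with $i\in\{1,2,3\}$ and $j\in\{1,\dots,8\}$, and the inherent constraints are $(a_{i,j},a_{i+1,j+k})$ for $k\in\{-1,0,1\}$. The crucial observation is that $Q$ contains many copies of the \N network, and any SP extension must destroy every one of them (by Valdes et al., SP networks are exactly the \N-free ones). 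Destroying an \N copy forces a choice of which extra precedence edge to add, and these choices are in tension with one another.

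Next I would set up the adversary argument. Because $T(Q,t)$ equals the weight of the longest inherent chain, and any inherent chain in a depth-$3$ network uses at most one activity per level, I have $T(Q,t)\le \max_j t(a_{1,j})+\max_j t(a_{2,j})+\max_j t(a_{3,j})$, and more precisely the inherent critical path is constrained by the local neighbour structure. By contrast, any SP extension $G_{SP}$ introduces a series composition somewhere that merges two previously parallel branches; the adversary places large weight on activities that are forced into series by that merge but were parallel (or independent) in $Q$. The heart of the proof is a case analysis over the finitely many ``essentially different'' series-parallelisations of $Q$ — or, better, over the finitely many ways the first series-split at the top level of the SP expression can partition the eight width-columns. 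For each such top-level split I would identify two activities that $G_{SP}$ forces onto a common chain but that lie on \emph{different} inherent critical paths in $Q$, then load those two activities with weight $C\gg1$ and everything else with weight $1$. This yields $T(G_{SP},t)\approx 2C$ while $T(Q,t)\approx C$, giving slowdown approaching (and, after accounting for the unit background, strictly exceeding) $2$.

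I would carry out the steps in this order: (1) fix the vertex and edge set of $Q$ and note the \N copies; (2) record the makespan formulas $T(Q,t)=\max\{T(C)\mid C \text{ a chain in } Q\}$ and, for SP, $T(G_{SP},t)=T(g)$ via the max-plus recursion from the earlier proposition; (3) argue that any SP expression for an extension of $Q$ must, at its outermost series step, place some pair of ``incompatible'' activities in series; (4) for each candidate outermost decomposition, construct the witnessing workload with two heavy activities and verify slowdown $>2$ by direct evaluation of the two makespans; (5) conclude that since $G_{SP}$ was arbitrary, no extension satisfies the bound for all $t$. The finiteness and symmetry of $Q$ (the degree-$3$ neighbour pattern on $8$ columns of depth $3$) is what makes the case analysis tractable rather than infinite.

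The hard part will be step (3)–(4): showing that \emph{every} SP extension — not just the obvious LC one — is forced to serialise an incompatible pair, and doing so without an unwieldy explosion of cases. The width $8$ with $w\ge 2d-1=5$ is presumably chosen precisely so that the neighbour overlaps cannot all be resolved ``in parallel'': any SP structure must sequence two branches whose heavy-activity placement the adversary can exploit. I expect the cleanest route is to show that in any \N-free extension of $Q$, the eight columns cannot be split at the top level into two parallel blocks each respecting the inherent depth-$3$ critical structure, so some crossing neighbour edge forces a series merge; I would then exhibit one universal heavy pair that works regardless of which merge occurs, collapsing the case analysis. Verifying that a single workload (or a small parameterised family in $C$) defeats all series-parallelisations simultaneously — rather than tailoring a workload per extension — would give the strongest and shortest argument, and confirming that such a uniform witness exists is the main technical risk.
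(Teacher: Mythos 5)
There is a genuine quantitative gap at the heart of your step (4): an adversary that loads \emph{two} incomparable activities with weight $C$ cannot force slowdown greater than $2$. If $a,b$ form an antichain in $Q$ but lie on a common chain in $G_{SP}$, and you set $t(a)=t(b)=C$ with background $\epsilon$, then no chain of $Q$ contains both heavy activities, so $T(Q,t)\ge C$ only up to $C+2\epsilon$ (a chain in the depth-$3$ network $Q$ has at most three activities), while $T(G_{SP},t)\ge 2C$. The ratio $(2C+O(\epsilon))/(C+2\epsilon)$ approaches $2$ from \emph{below} and never exceeds it for any positive workload; your parenthetical hope that ``accounting for the unit background'' pushes it strictly past $2$ fails, since the background inflates the denominator ($C+2$) faster than it helps the numerator. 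The paper's proof instead establishes that every SP extension $Q'$ of $ns(3,8,3)$ must collapse a \emph{three}-element antichain $\{a,b,c\}$ of $Q$ into a chain of $Q'$; with $t(a)=t(b)=t(c)=1$ and background $\epsilon$ this gives slowdown at least $3/(1+2\epsilon)$, comfortably exceeding $2$.

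Finding that triple is also where the real structural work lies, and your step (3) underestimates it. It is not enough that the outermost series split $\alpha\beta$ serialises one incompatible pair: one needs that (i) the split exists at all (weak connectivity), (ii) membership in $\alpha$ or $\beta$ is closed under the neighbour-precedence cone (so whole triangles of activities move together), and (iii) the last three columns form a copy of $ns(3,3,3)$ that cannot be SP inside either part, forcing a \emph{second} serialisation of an \N{} subnetwork entirely within (say) $\beta$ --- that second serialisation supplies two of the three antichain members $b,c$, and the first split supplies $a$ from $\alpha$, chosen so that $a$ does not precede $b$ or $c$ in $Q$. Your instinct about why width $8\ge 2d-1$ matters is sound, and a uniform witness workload per extension is indeed what the paper uses (three unit weights, all else $\epsilon$), but as written your adversary construction proves only that slowdown can be made arbitrarily close to $2$, which does not refute the conjecture.
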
 We need some lemmas for the proof.

\begin{lemma}\label{connected} Any SP extension of a weakly connected
network $G$ will have an SP expression of the form $\alpha \beta$,
where both $\alpha$ and $\beta$ are SP expressions.  \end{lemma}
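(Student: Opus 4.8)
The plan is to exploit the direct correspondence between the outermost operator of an SP expression and the connectivity of the network it denotes. First I would recall that, by the SP grammar, any SP expression is either a single activity $a$, a parallel composition $(\gamma \+ \delta)$, or a series composition $\gamma \cdot \delta$; hence any SP extension $G_{SP}$ with at least two activities has a top-level operator that is either $\+$ or $\cdot$. (The single-activity case is degenerate --- a one-vertex network is its own unique SP extension and has no $\alpha\beta$ form --- so the statement is understood for networks with at least two activities, which covers every network of interest here.)

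The key observation is that parallel composition never introduces edges between its two operands: by the definition of $G_1 \+ G_2$, the edge set is exactly $E_1 \cup E_2$, so no vertex of $\gamma$ is joined to any vertex of $\delta$. Consequently, if $G_{SP}$ had an SP expression whose outermost operator were $\+$, then $G_{SP}$ would split into two nonempty parts with no edges between them, and would therefore be disconnected even when edge orientations are ignored, i.e.\ not weakly connected.

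Next I would argue that weak connectivity is inherited by extensions. Since $G_{SP}$ is an extension of $G$, we have $V_{SP} = V$ and $E \subseteq E_{SP}$; adding edges cannot disconnect an already weakly connected graph, so $G_{SP}$ is weakly connected whenever $G$ is. Combining this with the previous paragraph, the outermost operator of any SP expression for $G_{SP}$ cannot be $\+$, and must therefore be $\cdot$. Writing $\alpha$ and $\beta$ for the two operands then yields the required form $G_{SP} = \alpha\beta$.

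I expect the only care needed to be at the boundary between the recursive grammar and the graph-theoretic notion of weak connectivity: one must confirm that both operands of a top-level $\+$ are nonempty (immediate, since every SP expression contains at least one activity) and that the ``outermost operator'' is well defined up to the associativity of $\cdot$ and $\+$. Neither presents a genuine obstacle, so the argument reduces to the single structural fact that a weakly connected SP network cannot decompose as a parallel composition.
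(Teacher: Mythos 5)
Your argument is correct and is essentially the paper's own proof: the paper likewise observes that a top-level parallel composition $\alpha\+\beta$ would leave no constraints between the two operands, contradicting weak connectivity (which is preserved when edges are added), so the outermost operator must be series. You simply spell out the preservation of weak connectivity under extension and the degenerate single-activity case, which the paper leaves implicit.
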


\begin{proof} An SP expression $\alpha\+\beta$ has no constraints
between activities in $\alpha$ and in $\beta$, so the network is
disconnected. The result follows by contradiction. \end{proof}

\begin{lemma}\label{updownclosed} Suppose an NS network $G$ has
SP expression $\alpha\beta$ with $\Delta$ odd.  \begin{enumerate}
\item If $a_{i,j}$ is in $\alpha$ then $a_{k,l}$ is also, whenever
$(\Delta-1)(i-k)/2 \ge |j-l|$.  \item If $a_{i,j}$ is in $\beta$
then $a_{k,l}$ is also, whenever $(\Delta-1)(k-i)/2 \ge |j-l|$.
\end{enumerate} \end{lemma}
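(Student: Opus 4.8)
The plan is to exploit the one structural feature that distinguishes series composition. Writing $\alpha\beta$ for $\alpha\cdot\beta$, the series composition rule gives edge set $(V_\alpha\times V_\beta)\cup E_\alpha\cup E_\beta$, so every edge either stays inside $\alpha$, stays inside $\beta$, or runs from $\alpha$ to $\beta$; in particular no edge is directed from $\beta$ into $\alpha$. Since an extension retains every precedence constraint of $G$, each original NS edge $(a_{i,j},a_{i+1,l})$ --- which is present exactly when $|l-j|\le(\Delta-1)/2$, using that $\Delta$ odd gives $\lfloor(\Delta-1)/2\rfloor=\lceil(\Delta-1)/2\rceil=(\Delta-1)/2$ --- appears in the extension and therefore cannot go from $\beta$ to $\alpha$.

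First I would record the two one-step propagation facts this yields. If $a_{i,j}\in\beta$ and $|l-j|\le(\Delta-1)/2$, then the edge $(a_{i,j},a_{i+1,l})$ starts in $\beta$ and so must lie in $E_\beta$, forcing $a_{i+1,l}\in\beta$; symmetrically, if $a_{i+1,l}\in\alpha$ and $|l-j|\le(\Delta-1)/2$, the same edge ends in $\alpha$, so it lies in $E_\alpha$ and $a_{i,j}\in\alpha$. Thus $\beta$ is closed under one-step NS successors and $\alpha$ under one-step NS predecessors.

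Next I would lift these local facts to the stated multi-step closure by building an explicit directed path in $G$ and iterating. For statement~2, given $a_{i,j}\in\beta$ and a target $a_{k,l}$ with $k\ge i$ and $|l-j|\le(\Delta-1)(k-i)/2$, I would construct a path that advances one row per step, moving the column monotonically toward $l$ by at most $(\Delta-1)/2$ each step and then holding it fixed (a legal zero shift, since $0$ lies in the allowed range for odd $\Delta$) once column $l$ is reached. The bound $|l-j|\le(\Delta-1)(k-i)/2$ guarantees $l$ is reached within $k-i$ steps, and induction along the path --- each vertex lies in $\beta$ by the one-step fact --- gives $a_{k,l}\in\beta$. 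Statement~1 is the mirror image: build a path ascending in row index from $a_{k,l}$ to $a_{i,j}$ and propagate $\alpha$-membership backwards.

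The main obstacle is not the propagation itself but keeping the constructed path inside the activity set, since the NS network only contains $a_{r,c}$ with $1\le c\le w$ and restricts its edges accordingly. I would address this by moving the column monotonically between $j$ and $l$, so every intermediate column lies in $[\min(j,l),\max(j,l)]\subseteq[1,w]$ and every step is a genuine NS edge. I would also dispose of the degenerate cases $k=i$ and $\Delta=1$ separately, where the hypothesis forces $l=j$ and the claim collapses to membership of $a_{i,j}$ itself.
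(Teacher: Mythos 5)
Your proof is correct and follows essentially the same route as the paper's: both rest on the observation that series composition forbids any precedence from $\beta$ back into $\alpha$, while the NS condition $(\Delta-1)(k-i)/2\ge|j-l|$ guarantees a directed path from $a_{i,j}$ to $a_{k,l}$ in $G$ that the extension must preserve. The only difference is presentational — the paper invokes transitive precedence in one step and derives a contradiction, whereas you make the underlying path construction and the column-range bookkeeping explicit and propagate membership edge by edge.
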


\begin{proof} Suppose $a_{i,j}$ is in $\alpha$.  By the definition
of NS networks, $(\Delta-1)(i-k)/2 \ge |j-l|$ means that $a_{k,l}$
precedes $a_{i,j}$.  If $a_{k,l}$ were in $\beta$ then $a_{i,j}$
would precede $a_{k,l}$, which is impossible.  The second part is
symmetric. \end{proof} For a network $G$ and an SP expression
$\alpha$, let $G|_\alpha$ denote the subnetwork of $G$ consisting of
only those activities that appear in $\alpha$.

\begin{lemma}\label{depth} Suppose $d \ge 3$ and $w \ge 3$. Any SP
extension of $ns(d,w,3)$ will have an SP expression of the form $\alpha
\beta$, where either $G|_\alpha$ or $G|_\beta$ is not SP.  \end{lemma}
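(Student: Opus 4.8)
The plan is to reduce the statement to an elementary counting fact about a one-dimensional threshold profile. By Lemma~\ref{connected}, since $ns(d,w,3)$ is weakly connected (each column is a vertical chain, and the edges $a_{i,j}\to a_{i+1,j+1}$ link consecutive columns), any SP extension has an expression $\alpha\beta$. The real input is Lemma~\ref{updownclosed} with $\Delta=3$: $\alpha$ is closed under the backward cone and $\beta$ under the forward cone. Specialising the closure to a single column gives $a_{i,j}\in\alpha\Rightarrow a_{i-1,j}\in\alpha$, so each column $j$ splits at a threshold $\tau_j\in\{0,\dots,d\}$, with $a_{i,j}\in\alpha$ exactly when $i\le\tau_j$. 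Specialising instead to $a_{i-1,j\pm1}$ gives $a_{i,j}\in\alpha\Rightarrow a_{i-1,j\pm1}\in\alpha$; evaluating at $i=\tau_j$ yields $\tau_{j\pm1}\ge\tau_j-1$, and by symmetry $|\tau_{j+1}-\tau_j|\le1$, so $\tau$ is $1$-Lipschitz.

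Next I would characterise when $G|_\alpha$ or $G|_\beta$ contains an \N, as a condition on $\tau$. Since every NS edge joins consecutive levels, I can look for an \N carried by two levels $i,i+1$ and three consecutive columns: such an \N chooses two ``top'' columns (level $i$) and two ``bottom'' columns (level $i+1$) whose $2\times2$ adjacency, with an edge iff the column distance is at most one, has exactly three edges. Enumerating the six column-patterns that achieve this, and requiring all four chosen activities to have level at most their column threshold (for membership in $\alpha$), turns each pattern into an inequality on $(\tau_j,\tau_{j+1},\tau_{j+2})$. Using the Lipschitz bound to discharge the weak side-conditions, I expect the clean conclusion that the window admits an \N inside $\alpha$ whenever at least two of $\tau_j,\tau_{j+1},\tau_{j+2}$ are $\ge2$, and---by reflecting the levels, which replaces $\tau$ with the equally Lipschitz profile $d-\tau$ and exchanges $\alpha$ and $\beta$---an \N inside $\beta$ whenever at least two of them are $\le d-2$.

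Finally I would apply pigeonhole to any single window, say columns $1,2,3$ (available because $w\ge3$). Escaping both conclusions would force at least two thresholds to be $\le1$ and at least two to be $\ge d-1$ simultaneously; but $d\ge3$ makes $d-1\ge2$, so these two ranges are disjoint and cannot together occupy four of only three slots. Hence the window plants an \N in $G|_\alpha$ or in $G|_\beta$, proving the lemma. I expect the enumeration in the middle step to be the main obstacle: there are more admissible three-column patterns forming an \N than the two obvious mirror images (including patterns whose top columns are adjacent), and it is exactly the Lipschitz property that upgrades the crude ``two large / two small'' count into actual copies of \N---without it, two large thresholds flanking a zero would leave both sides free of \N. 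Edge effects at the ends of the column range are harmless, since the pigeonhole already succeeds for an arbitrary window.
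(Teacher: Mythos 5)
Your proposal is correct and complete as sketched; the enumeration you flag as the main obstacle does go through. It rests on the same two ingredients as the paper's proof---Lemma~\ref{connected} to obtain the form $\alpha\beta$, and Lemma~\ref{updownclosed} to constrain the cut---and both arguments finish by exhibiting an explicit four-activity \N spanning two consecutive levels and three consecutive columns, lying wholly inside $\alpha$ or wholly inside $\beta$. The organisation differs, though. The paper reduces to $ns(3,3,3)$ (handling larger $d$ and $w$ by restricting to a $3\times 3$ subgrid) and then performs a case split on where the middle row $a_{2,1},a_{2,2},a_{2,3}$ falls, using the closure lemma to pull in level-$1$ or level-$3$ activities and complete each \N. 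You instead encode the cut as a column-threshold profile $\tau$, derive the $1$-Lipschitz property from the closure lemma, translate ``\N in $\alpha$'' into ``at least two of three consecutive thresholds are $\ge 2$'' (dually for $\beta$ with $d-\tau$), and conclude by pigeonhole using $d-1\ge 2$. Your route works uniformly in $d$ and makes explicit a step the paper leaves implicit (that the cut restricted to a $3\times3$ subgrid inherits the closure properties needed for the base case), at the price of checking all three placements of the two large thresholds in the window; the adjacent-columns placement, which you correctly anticipate as not being one of the two obvious mirror images, does yield an \N (e.g.\ $\{a_{1,j},a_{1,j+2},a_{2,j},a_{2,j+1}\}$ when $\tau_j,\tau_{j+1}\ge 2$). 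The paper's version is shorter once the reduction to $ns(3,3,3)$ is accepted; yours is more systematic and would extend more readily to wider windows or larger $\Delta$.
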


\begin{proof} We argue a contradiction for $ns(3,3,3)$; the result
follows for larger $w$ and $d$ by considering any subnetwork
isomorphic to $ns(3,3,3)$ which is not completely contained in
either $\alpha$ or $\beta$.  Suppose $\alpha$ and $\beta$ are
both SP.  Suppose activity $a_{2,1}$ and $a_{2,3}$, are both in
$\alpha$ without loss of generality.  By Lemma~\ref{updownclosed},
$a_{1,1}$ and $a_{1,2}$ are then both in $\alpha$ or both in
$\beta$.  However, $\{a_{1,1},a_{1,2},a_{2,1},a_{2,3}\}$ forms
an \N network in $G$, so $G|_\alpha$ cannot be SP.  Now suppose
activity $a_{2,1}$ is in $\alpha$ and $a_{2,3}$ is in $\beta$ (the
opposite arrangement is symmetric).  If $a_{2,2}$ is in $\alpha$ then
$\{a_{1,1},a_{1,3},a_{2,1},a_{2,2}\}$ forms an \N network in $G$; if
$a_{2,2}$ is in $\beta$ then $\{a_{2,2},a_{2,3},a_{3,1},a_{3,3}\}$
forms an \N network in $G$.  Hence at least one of $G|_\alpha$
or $G|_\beta$ is not SP. \end{proof} The depth of $3$ in
Lemma~\ref{depth} is necessary, as any NS network of depth 2 can be
made SP by enforcing level 1 to precede level 2, and each level is
an SP network.  Further, any width $2$ NS network is SP, so the width
of $3$ is also necessary.

\begin{proof}[Proof (of Theorem~\ref{nofactortwo})] We show that in any SP
extension $Q'$ of $Q$, there must exist three activities $a,b,c$ which
form an antichain in $Q$ but a chain in $Q'$, and then construct a
suitable workload using this chain.  Possible arrangements of $a,b,c$
are illustrated.

\begin{center} \begin{tikzpicture}[xscale=0.66,yscale=-0.75,>=stealth]
\tikzstyle{v}=[circle,minimum size=1mm,inner sep=0pt,draw] \node[v]
(a1) at (1,1) {}; \node[v] (a2) at (2,1) {}; \node[v] (a3) at (3,1)
{}; \foreach \i in {1,...,8} \foreach \j in {1,...,3} \node[v]
(G-\i-\j) at (\i,\j) {}; \foreach \i in {1,...,8} \foreach \j/\o
in {1/2,2/3} \draw[->,thin,gray] (G-\i-\j) -- (G-\i-\o); \foreach
\i/\n in {1/2,2/3,3/4,4/5,5/6,6/7,7/8} \foreach \j/\o in {1/2,2/3}
{ \draw[->,thin,gray] (G-\i-\j) -- (G-\n-\o); \draw[->,thin,gray]
(G-\n-\j) -- (G-\i-\o); } \node[right=-0pt of a1.east] {$a$};
\node[right=-0pt of a2.east] {$a$}; \node[right=-0pt of a3.east]
{$a$}; \node[right=-0pt of G-6-2.east] {$b$}; \node[right=-0pt
of G-6-3.east] {$b$}; \node[right=-0pt of G-8-2.east] {$c$};
\node[right=-0pt of G-7-3.east] {$c$}; \node at (1,3.15) {};
\draw[->,very thick] (G-6-2) -- (G-6-3); \draw[->,very thick] (G-6-2)
-- (G-7-3); \draw[->,very thick] (G-8-2) -- (G-7-3); \end{tikzpicture}
\hspace*{0.8cm} \begin{tikzpicture}[xscale=0.66,yscale=-0.75,>=stealth]
\tikzstyle{v}=[circle,minimum size=1mm,inner sep=0pt,draw] \foreach
\i in {1,...,8} \foreach \j in {1,...,3} \node[v] (G-\i-\j) at
(\i,\j) {}; \foreach \i in {1,...,8} \foreach \j/\o in {1/2,2/3}
\draw[->,thin,gray] (G-\i-\j) -- (G-\i-\o); \foreach \i/\n in
{1/2,2/3,3/4,4/5,5/6,6/7,7/8} \foreach \j/\o in {1/2,2/3} {
\draw[->,thin,gray] (G-\i-\j) -- (G-\n-\o); \draw[->,thin,gray]
(G-\n-\j) -- (G-\i-\o); } \node[right=-0pt of G-4-1.east] {$a$};
\node[right=-0pt of G-5-1.east] {$a$}; \node[right=-0pt of G-6-1.east]
{$a$}; \node[right=-0pt of G-7-1.east] {$a$}; \node[right=-0pt
of G-8-1.east] {$a$}; \node[right=-0pt of G-1-1.east] {$b$};
\node[right=-0pt of G-1-2.east] {$b$}; \node[right=-0pt of G-3-1.east]
{$c$}; \node[right=-0pt of G-2-2.east] {$c$}; \node at (1,3.15) {};
\draw[->,very thick] (G-1-1) -- (G-1-2); \draw[->,very thick] (G-1-1)
-- (G-2-2); \draw[->,very thick] (G-3-1) -- (G-2-2); \end{tikzpicture}
\end{center}

By Lemma~\ref{connected}, any SP extension $Q'$ has an SP expression
as $\alpha\beta$.  Now by Lemma~\ref{depth}, at least one of $\alpha$
or $\beta$ is not SP.  Moreover, the subnetwork of just the last three
columns is isomorphic to $ns(3,3,3)$, so its activities that are in
either $\alpha$ or $\beta$ must form a non-SP subnetwork.  Without loss
of generality, suppose this is $\beta$ (in the degenerate case there
may then be no activities in $\alpha$ from the last three columns).

Now $a_{3,6},a_{3,7},a_{3,8}$ must all be in $\beta$ by
Lemma~\ref{updownclosed}, by a similar argument to that in the proof
of Lemma~\ref{depth}.  There are now two possibilities.

The first is that at least one of $a_{1,1},a_{1,2},a_{1,3}$ appears
in $\alpha$.  In this case, denote this activity by $a$.  Further, at
least two of $a_{2,6},a_{2,7},a_{2,8}$ must be in $\beta$, and these
two together with two of $a_{3,6},a_{3,7},a_{3,8}$ then forms an \N
subnetwork $Q_\N$ of $Q$.  Note that in $Q$, $a$ does not precede
any of the activities of $Q_\N$.

The second possibility is that $a_{1,1},a_{1,2},a_{1,3}$ are all in
$\beta$.  Then by Lemma~\ref{updownclosed}, $a_{2,1}$ and $a_{2,2}$
are both in $\beta$ as well, when $a_{1,1},a_{1,3},a_{2,1},a_{2,2}$
forms an \N subnetwork $Q_\N$ of $Q$.  In this case, consider the
activities $\{a_{1,4},a_{1,5},a_{1,6},a_{1,7},a_{1,8}\}$.  At least
one of these must be in $\alpha$, by Lemma~\ref{updownclosed} and
since $\alpha$ is non-empty.  Denote this activity by $a$.  In $Q$,
$a$ does not precede any of the activities of $Q_\N$.

In either case, in $Q'$ there must be two activities $b$ and $c$ of
$Q_\N$ which form an antichain in $Q$ but a chain in $Q'$.  Now $a$
and $b$ forms an antichain in $Q$ but $a$ precedes $b$ in $Q'$,
and the same observation holds for $a$ and $c$.  Hence $\{a,b,c\}$
forms an antichain in $Q$ but a chain in $Q'$.

Let $T(a) = T(b) = T(c) = 1$ and $T(x) = \epsilon$ for every other
activity $x$.  The slowdown of $Q'$ is then at least $3/(1+2\epsilon)$,
which can be made arbitrarily close to $3$.  In particular, if
$\epsilon = 1/10$ then the slowdown is at least $5/2$. \end{proof}
We next state a workload-dependent conjecture, and provide evidence
for it.

\section{New conjecture}

\begin{conjecture}\label{factorfourthree} For any activity network
$G=(V,E)$ and workload $t:V \rightarrow (0,\infty)$, there exists an
SP extension $G_{SP}$ of $G$, such that \[ \frac{T(G_{SP},t)}{T(G,t)}
\leq \frac{4}{3}. \] \end{conjecture}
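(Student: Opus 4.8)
The plan is to prove the bound by induction on $|V|$, with the \N network as the base case, handled symbolically via the max-plus characterisation of SP makespan. First I would treat the \N network of Figure~\ref{N}, with activities $a,b,c,d$ and constraints $(a,c),(a,d),(b,d)$, whose makespan is $T(G,t)=\max\{t(a)+t(c),\,t(a)+t(d),\,t(b)+t(d)\}$. Up to dominated choices there are three SP resolutions, adding one of the edges $(b,c)$, $(a,b)$, or $(c,d)$; by the max-plus rules their makespans are $\max\{t(a),t(b)\}+\max\{t(c),t(d)\}$, then $t(a)+\max\{t(c),t(b)+t(d)\}$, and finally $\max\{t(a)+t(c),t(b)\}+t(d)$. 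I would show the minimum of these three is always at most $\frac{4}{3}T(G,t)$. Each expression is piecewise-linear in the weights, so this reduces to a finite check over the linear regions cut out by the $\max$ operators; the extremal case is $t(a)=t(d)=1$, $t(b)=t(c)=2$, where $T(G,t)=3$ while all three resolutions have makespan $4$. This both settles the base case and shows that $\frac{4}{3}$ is the best constant possible.

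The same max-plus method extends mechanically to all small networks. I would enumerate the finitely many non-SP DAGs up to the target size and, for each, enumerate its SP extensions, writing every makespan as a max-plus polynomial in symbolic weights. Verifying $\min_i T(S_i,t)\leq \frac{4}{3}\,T(G,t)$ then amounts to checking, on each polyhedral cone where all the $\max$ selections are fixed, that a single linear inequality holds on the extreme rays — a finite linear-programming computation. This is what makes the conjecture provable for small instances even in the absence of a uniform argument.

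For the general case I would attempt to lift the induction through the structure of SP extensions. If $G$ is disconnected one recurses on its components, which compose in parallel so that the slowdown is merely the maximum of the components' slowdowns. If $G$ is connected, Lemma~\ref{connected} forces any SP extension to split as a series product $\alpha\beta$ across some cut $P\to S$, whence $T(\text{ext})=T(\text{ext}(G|_P))+T(\text{ext}(G|_S))$. The difficulty — and the reason the bound remains conjectural — is that this sum can approach $2\,T(G,t)$ when the two sides run in parallel within $G$, so a blind series recursion recovers only the factor $2$ already available in the workload-independent case. The drop to $\frac{4}{3}$ comes entirely from the workload-aware interleaving visible in the \N base case, where the three resolutions braid the two chains so cleverly that their minimum beats the naive sum. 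Propagating this interleaving from a single \N obstruction to a global argument — choosing the cut and the internal resolutions jointly so that the summed makespan stays within $\frac{4}{3}T(G,t)$ — is the main obstacle, and is precisely where the max-plus verification of the small cases stops short of a full proof.
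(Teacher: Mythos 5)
Your attempt correctly recognises that the statement is a conjecture with no complete proof, and what you do establish coincides with the evidence the paper itself offers. Your base case is exactly Theorem~\ref{fournode}: the same three minimal extensions (adding $(a,b)$, $(b,c)$, or $(c,d)$, i.e.\ the paper's $\K$, $\X$, $\V$), the same max-plus makespans, and the same tight workload $t(a)=t(d)=1$, $t(b)=t(c)=2$ giving $4/3$. The only difference in execution is that the paper resolves the $\max$ selections by a short direct chain of inequalities (deriving $ac>b>a$ and $bd>c>d$, then $abd\le cc$ and hence $3(a+b+d)\le\max\{4(a+c),4(b+d)\}$), whereas you propose a brute-force check over the polyhedral cones on which the selections are constant; the latter is precisely the paper's programmatic method (chain enumeration plus a linear constraint solver) used for five and six activities, so the two routes are the same idea at different levels of automation. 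Your reduction of the disconnected case to components and of decomposable networks to smaller ones matches the paper's use of modular decomposition in Theorem~\ref{fivenode}. Finally, your diagnosis of why the argument stops there --- the forced series split $\alpha\beta$ of any SP extension of a connected network loses up to a factor of $2$ against parallel structure in $G$, and the gain to $4/3$ lives in the workload-dependent choice among interleavings that no one knows how to propagate globally --- is consistent with the paper, which likewise leaves Conjecture~\ref{factorfourthree} open beyond six activities. In short: no gap relative to the paper, because the paper has no proof of the general statement either; just be explicit that what you have is a proof of the small cases, not of the conjecture.
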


\noindent We now need to consider the evidence to support this
conjecture.  At least four activities are required to represent a
non-SP network, and the only non-SP network on four activities is the
\N network given in Figure~\ref{N}.  We start by proving the result
for the case of four activities.

\begin{theorem} \label{fournode} Let $G^4$ be an activity network with
four activities and workload $t$, then there exists an SP extension
$G^4_{SP}$ of $G^4$ such that $T(G^4_{SP},t)/T(G^4,t) \leq 4/3$.
\end{theorem}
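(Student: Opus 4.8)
The plan is to split on whether $G^4$ is series-parallel. If $G^4$ is already SP, take $G^4_{SP}=G^4$, so the slowdown is $1\le 4/3$ and we are done. Otherwise, since the only non-SP network on four activities is the \N network, $G^4$ is (up to isomorphism) \N, with precedences $a\to c$, $a\to d$, $b\to d$. Its only maximal chains are $\{a,c\}$, $\{a,d\}$, and $\{b,d\}$, so by the characterisation of makespan $T(G^4)=\max\{t(a)+t(c),\,t(a)+t(d),\,t(b)+t(d)\}$. Writing $w=t(a)$, $x=t(b)$, $y=t(c)$, $z=t(d)$ and abbreviating $P=w+y$, $Q=w+z$, $R=x+z$, this reads $T=\max\{P,Q,R\}$.

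Next I would enumerate the relevant SP extensions. Adding one of the missing comparabilities to \N yields the three candidates $E_1=(a\+b)(c\+d)$, $E_2=a\bigl((bd)\+c\bigr)$, and $E_3=\bigl((ac)\+b\bigr)d$; every other SP extension (e.g.\ those adding $b\to a$ or $d\to c$, and all chains) has makespan at least that of one of these, so it suffices to consider $E_1,E_2,E_3$. The max-plus makespan formula gives $T(E_1)=\max(w,x)+\max(y,z)$, $T(E_2)=\max(P,\,w+x+z)$, and $T(E_3)=\max(R,\,w+y+z)$. The goal thus reduces to the purely numerical claim $\min\{T(E_1),T(E_2),T(E_3)\}\le \frac43 T$ subject to $P,Q,R\le T$ with all variables positive.

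I would prove this by contradiction: suppose all three makespans exceed $\frac43 T$. Since $P\le T<\frac43T$ and $R\le T<\frac43T$, the maxima defining $T(E_2)$ and $T(E_3)$ must be attained by their second arguments, so $w+x+z>\frac43T$ and $w+y+z>\frac43T$. For $T(E_1)$ I case-split on which of $w,x$ and of $y,z$ is larger. If $x\le w$, or $y\le z$, then $T(E_1)$ equals one of $P$, $Q$, $R$, whence $T(E_1)\le T<\frac43T$ contradicts the assumption outright; this disposes of three of the four cases. The only surviving case is $x\ge w$ and $y\ge z$, where $T(E_1)=x+y$.

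The crux is this last case, and it is where the constant $4/3$ is forced. From $w+x+z>\frac43T$ and $R=x+z\le T$ I get $w>\frac13T$, and then $P=w+y\le T$ gives $y<\frac23T$; symmetrically $w+y+z>\frac43T$ with $P\le T$ gives $z>\frac13T$, and then $R=x+z\le T$ gives $x<\frac23T$. Hence $x+y<\frac43T$, contradicting $T(E_1)=x+y>\frac43T$, which closes the argument. I would finish by recording that the bound is tight, via $t(a)=t(d)=1$, $t(b)=t(c)=2$, for which $T=3$ while $T(E_1)=T(E_2)=T(E_3)=4$; thus $4/3$ cannot be improved on four activities and matches Conjecture~\ref{factorfourthree}. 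The main obstacle is precisely the coupled case $x\ge w,\ y\ge z$: here no single extension suffices, and one must combine the bounds coming from $E_2$ and $E_3$ (which penalise the two slow middle activities) with the constraints $P,R\le T$ to pin $x$ and $y$ simultaneously below $\frac23T$.
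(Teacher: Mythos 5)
Your proof is correct, and it shares the paper's skeleton: dispose of the SP case, reduce to the $\N$ network, identify the three minimal SP extensions with the same max-plus makespan formulas (your $E_1,E_2,E_3$ are the paper's $\X$, $\K$, $\V$ respectively, and your justification that every other SP extension dominates one of these matches the paper's), and then reason about the resulting inequalities. The endgame, however, is genuinely different. The paper first deduces from slowdown $>1$ the orderings $t(a)+t(c)>t(b)>t(a)$ and $t(b)+t(d)>t(c)>t(d)$, which collapse each makespan to a single critical chain ($T(\K)=a+b+d$, $T(\X)=b+c$, $T(\V)=a+c+d$, $T(\N)=\max\{a+c,\,b+d\}$); it then proves $3\,T(\K)\le 4\,T(\N)$ directly for whichever extension is minimal, via a two-case additive manipulation ($3(a+b+d)\le a+b+d+4c$, hence $3(a+b+d)\le\max\{4(a+c),4(b+d)\}$). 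You instead assume all three extensions exceed $\tfrac43 T$ and pin down the individual durations against $T$: the chains $a,b,d$ and $a,c,d$ force $t(a),t(d)>T/3$, and the chains $a,c$ and $b,d$ then force $t(b),t(c)<2T/3$, contradicting $T(E_1)=t(b)+t(c)>\tfrac43T$. Your version is more elementary and makes it transparent where the constant $4/3$ is forced (each slow middle activity is squeezed between a three-activity chain of an extension and a two-activity chain of $\N$); the paper's version identifies, for a given workload, which extension actually attains the bound, and its max-plus manipulation is the template reused in the five- and six-activity analyses. Your tightness example ($t(a)=t(d)=1$, $t(b)=t(c)=2$) is the same one the paper records after the proof.
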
 \begin{proof} All networks with four activities except
the \N network are SP, for which $T(G^4_{SP},t)/T(G^4,t) = 1 \leq 4/3$.
In the case of $G^4 = \N$, label the activities of \N so that it has
edges $(a,c)$, $(a,d)$ and $(b,d)$. There are then three minimal SP
extensions (in the sense that every other SP extension contains one
of these as a subnetwork):

\noindent \begin{tabular}{ll} \parbox[b]{0.45\textwidth}{
\begin{description} \item[($K$):] $(a,c)$, $(a,d)$, $(b,d)$, $(a,b)$
\item[($X$):] $(a,c)$, $(a,d)$, $(b,d)$, $(b,c)$ \item[($V$):]
$(a,c)$, $(a,d)$, $(b,d)$, $(c,d)$.  \end{description} } &
\hspace*{-0.5cm} \begin{tikzpicture}[yscale=-0.8,xscale=0.8,>=stealth]
\tikzstyle{v}=[circle,minimum size=1mm,inner sep=0pt,draw] \node[v]
(a) at (0,0) {}; \node[v] (b) at (1,0) {}; \node[v] (c) at (0,1) {};
\node[v] (d) at (1,1) {}; \node at (0.5,1.3) {$\N$}; \node[left=-2pt
of a.west] {$a$}; \node[right=-2pt of b.east] {$b$}; \node[left=-2pt
of c.west] {$c$}; \node[right=-2pt of d.east] {$d$}; \draw[->,gray]
(a) to (c); \draw[->,gray] (a) to (d); \draw[->,gray] (b) to (d);
\end{tikzpicture} \begin{tikzpicture}[yscale=-0.8,xscale=0.8,>=stealth]
\tikzstyle{v}=[circle,minimum size=1mm,inner sep=0pt,draw] \node[v]
(a) at (0,0) {}; \node[v] (b) at (1,0.30) {}; \node[v] (c) at (0,1) {};
\node[v] (d) at (1,1) {}; \node at (0.5,1.3) {$K$}; \node[left=-2pt
of a.west] {$a$}; \node[right=-2pt of b.east] {$b$}; \node[left=-2pt
of c.west] {$c$}; \node[right=-2pt of d.east] {$d$}; \draw[->,thick]
(a) to (b); \draw[->,gray] (a) to (c); \draw[->,gray] (b) to (d);
\end{tikzpicture} \begin{tikzpicture}[yscale=-0.8,xscale=0.8,>=stealth]
\tikzstyle{v}=[circle,minimum size=1mm,inner sep=0pt,draw]
\node[v] (a) at (0,0) {}; \node[v] (b) at (1,0) {}; \node[v] (c)
at (0,1) {}; \node[v] (d) at (1,1) {}; \node at (0.5,1.3) {$X$};
\node[left=-2pt of a.west] {$a$}; \node[right=-2pt of b.east]
{$b$}; \node[left=-2pt of c.west] {$c$}; \node[right=-2pt
of d.east] {$d$}; \draw[->,gray] (a) to (c); \draw[->,gray]
(a) to (d); \draw[->,thick] (b) to (c); \draw[->] (b) to (d);
\end{tikzpicture} \begin{tikzpicture}[yscale=-0.8,xscale=0.8,>=stealth]
\tikzstyle{v}=[circle,minimum size=1mm,inner sep=0pt,draw] \node[v]
(a) at (0,0) {}; \node[v] (b) at (1,0) {}; \node[v] (c) at (0,0.7) {};
\node[v] (d) at (1,1) {}; \node at (0.5,1.3) {$V$}; \node[left=-2pt
of a.west] {$a$}; \node[right=-2pt of b.east] {$b$}; \node[left=-2pt
of c.west] {$c$}; \node[right=-2pt of d.east] {$d$}; \draw[->,gray]
(a) to (c); \draw[->,thick] (c) to (d); \draw[->,gray] (b) to (d);
\end{tikzpicture} \end{tabular}

Denote $t(x)$ by $x$ for each $x \in \{ a,b,c,d \}$.  A quantity
such as $3(x+y)$ can be written $xyxyxy$ or using commutativity, just
$xxxyyy$.  Also, if $x \le y$ and $x \le z$ then the conclusion $x \le
\max\{y,z\}$ can instead be written as $x \le y \+ z$.  Now $T(\N)
= \max\{ac, ad, bd\} = ac \+ ad \+ bd$, $T(\K) = \max\{ac, abd\} =
ac \+ abd$, $T(\X) = \max\{ac, ad, bc, bd\} = ac \+ ad \+ bc \+ bd$,
and $T(\V) = \max\{acd, bd\} = acd \+ bd$.

The slowdown is always at least 1, so suppose it is greater than 1
(if it is equal to 1 then the theorem is true).  Then each of $T(\K)$,
$T(\X)$, and $T(\V)$ must exceed $T(\N)$.  Now if $acd \le bd$ then
$T(\V) = bd \le T(\N)$, a contradiction, so $acd > bd$, and hence $ac
> b$.  If $abd \le ac$ then $T(\K) = ac \le T(\N)$, a contradiction,
so $abd > ac$, and hence $bd > c$.  If $b \le a$ then $T(\X) = ac \+
ad \le T(\N)$, a contradiction, so $b > a$.  If $c \le d$ then $T(\X)
= ad \+ bd \le T(\N)$, a contradiction, so $c > d$.  Combined, this
yields $ac > b > a$ and $bd > c > d$.  This leads to $T(\N) = ac \+
bd$, $T(\K) = abd$, $T(\X) = bc$, and $T(\V) = acd$.  Of the three
possibilities for an SP extension with minimal makespan, we analyse \K\
(symmetric to \V); \X\ is similar.

Since $\K$ has minimal makespan among SP extensions, $adb \le bc$
and $adb \le acd$, so $ad \le c$ and $b \le c$.  Hence $abd \le cc$,
so $aaabbbddd \le abccccd$.  Therefore either $bbbddd \le acccc$ or
$aaa \le bd$.  In the first case, $aaabbbddd \le aaaacccc$, and in the
second case, $aaabbbddd \le bbbbdddd$.  In either event, $aaabbbddd
\le aaaacccc \+ bbbbdddd$.  However, $abd$ is just $T(\K)$ and $ac
\+ bd$ is just $T(\N)$, so $T(\K)/T(\N) \le 4/3$.  \end{proof}
Each of the three minimal SP extensions of \N with the workload
$t(a)=1$, $t(b)=2$, $t(c)=2$ and $t(d)=1$ has the same makespan of 4,
while $T(\N,t) = 3$, so the slowdown in this case is at least $4/3$.
This shows that if the workload-dependent bounded slowdown conjecture
holds, then the $4/3$ bound is tight.

Theorem~\ref{fournode} is independent of specific workloads.  This is
also the case for the next theorem.  The five-activity case requires
case analysis but it is done by contradiction rather than by the
direct method used in the four-activity case, and it also uses
max-plus algebra.  Some additional remarks are necessary.

Directed acyclic graphs can be decomposed into modules
\cite{McConnell2005:linear}.  When the edges form a transitive
relation, modules have either series or parallel structure, or
cannot be further decomposed.  Modular decomposition for activity
networks can then be thought of as an extension of the SP grammar
in Section~\ref{structure} by adding a terminal $\mathcal{N}$
representing those networks that cannot be further decomposed in
series or in parallel.  Such indecomposable networks include the \N
network and $ns(d,w,3)$ for $d\ge 3$ and $n\ge 3$.

\begin{theorem} \label{fivenode} Let $G^5$ be an activity network with
five activities and workload $t$ then there exists an SP extension
$G^5_{SP}$ of $G^5$ such that $T(G^5_{SP},t)/T(G^5,t) \leq 4/3$.
\end{theorem}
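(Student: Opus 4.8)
The plan is to lift Theorem~\ref{fournode} through the modular (series-parallel-with-$\mathcal{N}$) decomposition, so that only genuinely indecomposable pieces require fresh work. First I would record the composition principle that makes the induction go. If $G = G_1 \cdot G_2$ and each $G_i$ has an SP extension $G_i'$ with $T(G_i',t)/T(G_i,t) \le 4/3$, then $G_1' \cdot G_2'$ is an SP extension of $G$ and $T(G_1'\cdot G_2',t) = T(G_1',t)+T(G_2',t) \le \frac{4}{3}\bigl(T(G_1,t)+T(G_2,t)\bigr) = \frac{4}{3}T(G,t)$; the parallel case is identical with $\max$ in place of $+$, using the makespan rules for $\+$ and $\cdot$. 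Thus series and parallel composition both preserve the $4/3$ bound, and by structural induction on the decomposition tree it suffices to bound the slowdown on the prime (indecomposable) blocks, whose quotients are prime posets on at least four blocks.

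Next I would classify the prime blocks that can arise with five activities. Blocks of size at most three are automatically SP, and a four-element prime block is necessarily $\N$, already covered by Theorem~\ref{fournode}. For size five there are exactly two situations. In the first, the quotient is a prime poset on four blocks, which must be $\N$, one of whose blocks is a two-element module $M$ (a chain or antichain, hence SP). Here I would contract $M$ to a single activity of duration $T(M,t)$: because $M$ is a module, $T(\N[v\mapsto M],t) = T(\N,t')$ with $t'(v)=T(M,t)$, so Theorem~\ref{fournode} supplies an SP extension $H'$ of the four-node $\N$ with $T(H',t')\le\frac{4}{3}T(\N,t')$, and re-substituting $M$ gives $H'[v\mapsto M]$, an SP extension of $G$ with identical makespans and thus slowdown $\le 4/3$. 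The second, and only essentially new, situation is that $G$ is itself a prime five-element poset (quotient prime on five singleton blocks).

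For the prime five-element posets I would run the four-node argument block by block. Every such poset is non-SP and hence contains an induced $\N$; I would enumerate the prime posets on five points (each obtained by adjoining a fifth activity $e$ to an induced $\N$ in an order-consistent, module-free way, using order duality to halve the work) and, for each, list its minimal SP extensions. For a fixed poset I would then assume the slowdown exceeds $1$ for every minimal SP extension, translate this into the comparisons between the five durations that the candidate critical chains force (the max-plus inequalities $x \le y \+ z$ that drove the proof of Theorem~\ref{fournode}), and combine them to pin $T(G_{SP},t)/T(G,t) \le 4/3$ for the best extension. Each case stays finite and purely algebraic.

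The main obstacle is precisely this last step: the number of prime five-element posets, each with several candidate minimal SP extensions, together with the combinatorial dependence of which extension is optimal on the ordering of the five durations. The max-plus contradiction method keeps every individual case bounded and elementary, but organising all of them consistently---and correctly treating the degenerate ties where the worst-case ratio is attained exactly, as at the tight workload $t(a)=t(d)=1$, $t(b)=t(c)=2$ for $\N$---is where the genuine bookkeeping lies.
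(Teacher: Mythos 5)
Your proposal follows essentially the same route as the paper's proof: reduce via modular decomposition (using duality and the fact that series/parallel composition and module substitution preserve the $4/3$ bound, which is the content of the cited Theorem~5.12 of \cite{Salamon2001:thesis}) to the case of an $\N$ quotient handled by Theorem~\ref{fournode}, and then dispose of the remaining prime five-element posets by enumerating minimal SP extensions and deriving a max-plus contradiction from the assumption that every one exceeds slowdown $4/3$. The paper's proof is no more detailed than yours on the final case analysis, so the two are at the same level of completeness.
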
 \begin{proof} There are 16 non-isomorphic non-SP
activity networks with five activities.  An activity network and its
dual\footnote{The dual of a directed graph is the graph with its edges
reversed.} have the same slowdown results and we need only consider 9
activity networks.  Six of these can be analysed using decomposition
which yields an SP network with unit slowdown, together with an \N
network to which Theorem~\ref{fournode} can be applied, and the two
slowdowns can then be combined \cite[Theorem 5.12]{Salamon2001:thesis}.
For the three remaining indecomposable networks, the minimal SP
extensions are identified, and each case is checked using arguments
similar to those in the proof of Theorem~\ref{fournode}, yielding
sets of inequalities which each lead to a contradiction if slowdown
greater than 4/3 is assumed. \end{proof}

\section{Programmatic approach}

The six-activity case has been checked using an approach that is
now described.  Our implementation also verified the proofs for 4
and 5 activities.

For a fixed number of activities $n$, we want to consider all non-SP
networks with $n$ activities, and for each of these, to show that for
every possible workload there is a SP extension which achieves the
$4/3$ bound.  Working inductively, for networks with fewer than $n$
activities we have already shown the $4/3$ bound.  We also only need
to consider activity networks up to isomorphism.  Additionally, we do
not need to consider networks that can be decomposed such that there is
at least one series or parallel node in the decomposition, since the
slowdown is then bounded above by the slowdown of an activity network
with less than $n$ activities \cite[Theorem 5.12]{Salamon2001:thesis}.
Therefore we need only consider indecomposable networks and those
which are decomposable but where every module is indecomposable.

The overall schema is to consider each possible activity network
$G$ in turn, assuming that it is a counterexample.  Each of its
SP extensions then has slowdown exceeding $4/3$.  This generates a
system of inequalities, and we can then demonstrate that this system
has no solution.

First all possible $n$-activity networks are generated and
classified into SP, decomposable (but not SP), or indecomposable.
Isomorphic activity networks are discarded, reducing the number of
candidate counterexamples.  For some candidate $G$, consider each
minimal extension $H$.  Only considering minimal extensions is valid
because any non-minimal extension $H'$ will give $T(H) \leq T(H')$.
Every SP extension $H$ exceeds the $4/3$ bound, so we require that
$4T(G) < 3T(H)$ for each such extension.  It is also necessary to
consider some extensions that are not SP as these generate additional,
necessary constraints.  Specifically, we consider the decomposable
extensions because they have slowdown of at most $4/3$.  If $T(G) =
T(H)$ for a decomposable extension $H$ then by the inductive hypothesis
we could find an SP extension of $G$ that would meet the bound, hence
we require that $T(G) < T(H)$ for every decomposable extension $H$
of $G$.

We now need to ask which workloads can allow all these constraints
to hold simultaneously. Since $T(H) = \max\{ T(C) \mid C \text{ is a
chain in\ } H \}$, we can consider each possible chain as a critical
path and generate additional constraints that $T(C) \geq T(D)$ for
all chains $D$ in $H$.  Hence we need to consider the disjunction of
the sets of inequalities \[ \{ T(G) < T(C) \} \cup \{ 4T(G) < 3T(C) \}
\cup \{ T(C) \geq T(D) \mid D \text{ a chain in\ } H, D \neq C \} \]
for every maximal chain $C$ in $H$.  We only need to consider maximal
chains since non-maximal chains have lower makespan.  The makespan of
a chain is simply the sum of its activity durations, so each choice of
critical path $C$ generates a system of linear inequalities expressed
with variables that represent the unknown activity durations.

These inequalities can now be fed to a constraint solver such as
\texttt{clp(q)} \cite{Holzbaur1995:ofai} to check if a workload does
exist that meets the constraints.  If one is found then we have found
a counter-example to the $4/3$ conjecture.  For 4, 5, and 6 activities,
an exhaustive search showed that no counterexamples exist.

We have proved formally that the 4/3 bound holds for the
four-activity and five-activity case, and we have a programmatic
proof of the six-activity case.  This provides some evidence that
Conjecture~\ref{factorfourthree} is true.  The techniques used for
smaller indecomposable networks can be applied to the seven-activity
case also.  However, the systems of inequalities are too large to
handle with the tools currently used, so such a proof would require
new techniques or tools.

\section{Conclusions and further work}

Series-parallelising an activity network is done implicitly when a
program is expressed in an inherently series-parallel formalism, or
explicitly for the purposes of aiding scheduling.  We now consider
the implications of the bound for LC slowdown, the disproof of the
factor of 2 conjecture, and the new factor of 4/3 conjecture.

As shown in Section~\ref{lcbound}, LC slowdown is bounded above.
If all activities have very similar durations, a good bound is obtained
and LC extensions are useful. However, this bound is not necessarily
tight when durations vary.

In the motivating example, deciding which series-parallelisation
to use at the time of writing the program forces a particular
series-parallelisation before the workload is known.  Consider a
parallel programming environment that only allows SP activity
networks to be expressed.  At the time of writing, MATLAB is one
such environment and we believe that in practice both Mathematica and
OpenCL also require activity networks to be SP\footnote{ Mathematica
and OpenCL both provide SP constructs, as well as more general methods
to specify synchronization between activities; unfortunately these
require creating objects for each precedence constraint.  Such a
heavy-weight mechanism only makes sense if activities are all very
large (for instance, if the program consists of just a few threads),
or there are only few precedence constraints.}.

Theorem~\ref{nofactortwo} shows that requiring the
series-parallelisation to be chosen before the workload is known
accurately, may result in slowdown of more than $2$.  Iterating the
construction for larger NS networks (of greater width as well as depth)
allows the slowdown to be forced to be arbitrarily large.

Neighbour-synchronised networks are common in practice and may be
quite large.  The workload in practice may be different to what
was expected when writing the program; for instance, contention for
shared resources, communication delays, and cache misses are just
some of the stochastic effects that affect parallel computation and
that may produce large variations in the duration of an activity.
Therefore, choosing a series-parallelisation without taking into
account possible variations in workload may lead to large slowdown.

If one postpones the decision, it may be possible to do automated
analysis at compile time, or the scheduler may be able to work
around any locally arising bottlenecks due to stochastic variation
in activity durations.  Hence it would seem to be worthwhile allowing
sufficient expressivity in the language so that one can more closely
approximate the activity network of a computation.

On a positive note, if one can find a series-parallelisation that gives
one the conjectured 4/3 bound, then the impact of adding constraints
is limited -- the program will only take one-third as long again
as it would have taken without the additional constraints and this
seems a reasonable penalty to pay to obtain a structure that makes
many scheduling problems easier.

However, one needs to take into account the cost of finding a
series-parallel\-isation that achieves the bound.  Consider the
optimisation problem

\medskip

\prob{MINIMUM SERIES-PARALLELISATION (MSP)}{poset $G$, workload
$t \colon V(G) \rightarrow (0,\infty)$}{poset $H$, $H$ is a SPE of
$G$}{minimise $T(H)$}

\medskip

\noindent Let $|x|$ denote the size of an instance $x$ of MSP.
It is easy to show that MSP is in the complexity class NPO
\cite{Crescenzi1999:structure}.  Computing the level-constrained
extension of an activity network can be done in polynomial time as
discussed in Section~\ref{structure}.  The approximation ratio of
this procedure is bounded by $2^{O(|x|^2)}$.  MSP is therefore in
the class exp-APX, which is strictly contained in NPO unless P $=$
NP \cite{Crescenzi1999:structure}.

Conjecture~\ref{factorfourthree} implies that MSP can
be approximated within a factor of $4/3$, but there is not
necessarily a polynomial-time algorithm that can achieve this.
A branch-and-bound algorithm for solving MSP never needs to consider
more than $2^{O(|x|^2)}$ possible extensions, each corresponding to
a subset of edges.

So a polynomial-time algorithm achieves slowdown of at
most $2^{O(|x|^2)}$.  On the other hand, an SP extension
with minimal slowdown can be found in $2^{O(|x|^2)}$ time, and
Conjecture~\ref{factorfourthree} would bound this slowdown as being
at most $4/3$.  It is not clear how to close this gap; it appears
possible that MSP is exp-APX-hard.

MSP also seems related to the classical decision problem MINIMUM
PRECEDENCE CONSTRAINED SCHEDULING (MPCS) \cite{Garey1979:computers},
which is NP-complete.  The difficulty of MPCS derives from
there being only a limited number of processors.  In contrast,
MSP appears to be difficult because the output network must
be series-parallel.  The $(4/3-\epsilon)$-inapproximability
of MPCS \cite{Graham1979:optimization} suggests that a similar
inapproximability result may exist for MSP.

Several directions for future work are envisaged. The first relates
to the proof of Conjecture~\ref{factorfourthree}, at least for
7 activities. This requires improving the implementation so its
correctness could be verified and finding more powerful techniques
that avoid case analysis.  Second, a programming construct to specify
NS networks could be added to existing programming environments and
its performance established.  Finally, if the decision version of MSP
could be shown to be NP-complete, perhaps by reduction from MPCS, then
the NP-hardness of MSP would follow. Proving that MSP is exp-APX-hard
is another goal.

\paragraph{Acknowledgments} We thank Scott Hazelhurst and Conrad
Mueller for advice and guidance, and Stanislav \v{Z}ivn\'y for
helpful discussions.

\end{document}